\theoremstyle{plain}
\newtheorem{proposition}[theorem]{Proposition}
\newtheorem{claim}[theorem]{Claim}
\newtheorem{openquestion}[theorem]{Open Question}
\newlang{\SteinerTree}{Steiner \ Tree}
\newlang{\SteinerForest}{Steiner \ Forest}
\newlang{\DirectedSteinerTree}{Directed \ Steiner \ Tree}
\newlang{\DST}{DST}
\newlang{\DirectedSteinerNetwork}{Directed \ Steiner \ Network}
\newlang{\DSN}{DSN}
\newlang{\PrioritySteinerTree}{Priority \ Steiner \ Tree}
\newlang{\TemporalSteinerNetwork}{Temporal \ Steiner \ Network}
\newlang{\kTemporalSteinerNetwork}{\mathit{k}\textsf{-}Temporal \ Steiner \ Network}
\newlang{\TSN}{TSN}
\newlang{\kTSN}{\mathit{k}\textsf{-}TSN}
\newlang{\DirectedTemporalSteinerNetwork}{Directed \ Temporal \ Steiner \ Network}
\newlang{\kDirectedTemporalSteinerNetwork}{\mathit{k}\textsf{-}Directed \ Temporal \ Steiner \ Network}
\newlang{\DTSN}{DTSN}
\newlang{\kDTSN}{\mathit{k}\textsf{-}DTSN}
\newlang{\Simple}{Simple}
\newlang{\Monotonic}{Monotonic}
\newlang{\SingleSource}{Single\textsf{-}Source}
\newlang{\LabelCover}{Label \ Cover}
\newlang{\LC}{LC}
\newlang{\kPartiteHypergraphLabelCover}{\mathit{k}\textsf{-}Partite \ Hypergraph \ Label \ Cover}
\newlang{\kPHLC}{\mathit{k}\textsf{-}PHLC}
\title{Steiner Network Problems on Temporal Graphs\footnote{This work was partially supported by the National Science Foundation Graduate Research Fellowship Program award DGE 1106400, NIH grants U01HG007910 and U01MH105979, and the U.S.-Israel Binational Science Foundation.}}
\author[1]{Alex Khodaverdian}
\author[2]{Benjamin Weitz}
\author[3]{Jimmy Wu}
\author[4]{Nir Yosef}
\affil[1]{UC Berkeley, Berkeley, USA \\
  \texttt{alexkhodaverdian@berkeley.edu}}
\affil[2]{UC Berkeley, Berkeley, USA \\
  \texttt{bsweitz@cs.berkeley.edu}}
\affil[3]{Stanford University, Stanford, CA, USA \\
  \texttt{jimmyjwu@stanford.edu}}
\affil[4]{UC Berkeley, Berkeley, USA \\
  \texttt{niryosef@eecs.berkeley.edu}}
\authorrunning{A. Khodaverdian, B. Weitz, J. Wu, and N. Yosef} 
\subjclass{F.2 Analysis of Algorithms and Problem Complexity}
\keywords{Steiner tree, temporal graphs, hardness of approximation, approximation algorithms, network biology}
\begin{document}

\maketitle

\begin{abstract}
We introduce a \emph{temporal} Steiner network problem in which a graph, as well as changes to its edges and/or vertices over a set of discrete times, are given as input; the goal is to find a minimal subgraph satisfying a set of $k$ time-sensitive connectivity demands. We show that this problem, $\kTemporalSteinerNetwork$ ($\kTSN$), is $\NP$-hard to approximate to a factor of $k - \epsilon$, for every fixed $k \geq 2$ and $\epsilon > 0$. This bound is tight, as certified by a trivial approximation algorithm. Conceptually this demonstrates, in contrast to known results for traditional Steiner problems, that a time dimension adds considerable complexity even when the problem is offline.

We also discuss special cases of $\kTSN$ in which the graph changes satisfy a \emph{monotonicity} property. We show approximation-preserving reductions from monotonic $\kTSN$ to well-studied problems such as $\PrioritySteinerTree$ and $\DirectedSteinerTree$, implying improved approximation algorithms.

Lastly, $\kTSN$ and its variants arise naturally in computational biology; to facilitate such applications, we devise an integer linear program for $\kTSN$ based on network flows.
\end{abstract}

\section{Introduction}
The $\SteinerTree$ problem, along with its many variants and generalizations, is a core family of $\NP$-hard combinatorial optimization problems. Like many such problems, they have been intensely studied in both the classic ``static'' setting in which a single graph is given as input up-front, as well as the online and dynamic settings, in which an algorithm is required to produce outputs or decisions as parts of the input arrive. In this paper, we offer a perspective that sits in between the static and the online cases; in our \emph{temporal} setting, a graph, as well as changes to it over a set of discrete times, are all given as immediate input. Our study of this problem draws motivation and techniques from several lines of research, which we briefly summarize.

\subparagraph*{Classic Steiner problems}
The most basic Steiner problem is $\SteinerTree$: given a weighted undirected graph $G=(V,E)$ and a set of terminals $T \subseteq V$, find a minimum-weight subtree that spans $T$. The $\SteinerForest$ problem generalizes this: given $G=(V,E)$ and a set of demand pairs $D \subseteq V \times V$, find a subgraph that connects each pair in $D$. Currently the best known approximation algorithms give a ratio of 1.39 for $\SteinerTree$ \cite{byrka2010improved} and 2 for $\SteinerForest$ \cite{agrawal1995trees}. These problems are known to be $\NP$-hard to approximate to within some small constant \cite{chlebik2008steiner}.

For directed graphs, we have the $\DirectedSteinerNetwork$ problem ($\DSN$), in which we are given a weighted directed graph $G=(V,E)$ and $k$ demands $(a_1,b_1), \ldots, (a_k,b_k) \in V \times V$, and must find a minimum-weight subgraph in which each $a_i$ has a path to $b_i$. When $k$ is fixed, $\DSN$ admits a polynomial-time exact algorithm \cite{feldman1999directed}. For general $k$, the best known approximation algorithms have ratio $O(k^{1/2 + \epsilon})$ for any fixed $\epsilon > 0$ \cite{feldman2009improved,chekuri2011set}. On the complexity side, Dodis and Khanna \cite{dodis1999design} ruled out a polynomial-time $O(2^{\log^{1-\epsilon} n})$-approximation unless $\NP$ has quasipolynomial-time algorithms. An important special case of $\DSN$ is $\DirectedSteinerTree$, in which all demands have the form $(r,b_i)$ for some root node $r$. This problem has a $O(k^\epsilon)$-approximation scheme \cite{charikar1999approximation} and a lower bound of $\Omega(\log^{2-\epsilon} n)$ \cite{halperin2003polylogarithmic}.

Another related problem which will be useful to us is $\PrioritySteinerTree$, in which each edge has an associated \emph{priority} value, and each demand must be routed using edges of at least a certain priority. Charikar, Naor, and Schieber introduced this problem and gave a $O(\log k)$-approximation \cite{charikar2004resource}.

\subparagraph*{Network Biology}
A central object of study in molecular biology is a \emph{protein-protein interaction} (PPI) network, in which nodes represent proteins and each edge $(u,v)$ represents a physical interaction between proteins $u$ and $v$, occurring with some probability $p(u,v)$. Thus by a chain of such interactions, a signal can be propagated through the network from one protein $a$ to another protein $b$ with which it has no direct contact.

A common setting is that from laboratory experiments, it is known that when a particular biological process occurs, for a certain collection of protein pairs $\{(a_1,b_1),\ldots,(a_k,b_k)\}$, there must be a chain of interactions between each $a_i$ and $b_i$, though it is not known which intermediate proteins participated in these interaction chains. To infer these intermediate proteins computationally, we can try to find a subgraph, of maximum joint probability, that simultaneously enables signals between all the protein pairs, thereby explaining the overall biological activity. Setting the edges to have weights $w(e) = -\log p(e)$, the task becomes minimizing the total edge weight---precisely the $\SteinerForest$ problem. Indeed, variations on this idea have been used effectively by \cite{scott2005identifying}, \cite{huang2009integration}, \cite{yosef2009toward}, \cite{BenSh2012yeast}, \cite{Wu2013tcell}, and others to understand signal transduction pathways in living cells.

Despite these successes, most of the existing literature ignores a critical dimension of the problem: in reality, proteins can become activated or inactivated over time, forming a dynamic PPI network \cite{PrzytyckaSS10}. An important research direction is therefore to study Steiner problems that incorporate the times at which each demand should be satisfied; the work presented here is motivated by this challenge. See \autoref{section:computational_biology} for more details. \\

In this paper, we introduce temporal generalizations of $\DirectedSteinerNetwork$ and $\SteinerForest$. The main variant we consider is the following:

\begin{quote}
$\kTemporalSteinerNetwork$ ($\kTSN$) \\
\textbf{Input:}
\begin{enumerate}
    \item A sequence of undirected graphs or \emph{frames} $G_1 = (V,E_1), G_2 = (V,E_2), \ldots, G_T = (V,E_T)$ on the same vertex set. Each edge $e$ in the underlying edge set $E := \bigcup_t E_t$ has weight $w(e) \geq 0$.
    \item A set of $k$ \emph{connectivity demands} $\mathcal{D} \subseteq V \times V \times [T]$.
\end{enumerate}

We call $\mathcal{G} = (V,E)$ the \emph{underlying graph}. We say a subgraph $\mathcal{H} \subseteq \mathcal{G}$ \emph{satisfies} demand $(a,b,t) \in \mathcal{D}$ iff $\mathcal{H}$ contains an $a$-$b$ path $P$ along which all edges exist in $G_t$.

\textbf{Output:} A minimum-weight subgraph $\mathcal{H} \subseteq \mathcal{G}$ that satisfies every demand in $\mathcal{D}$.
\end{quote}

Similarly $\kDirectedTemporalSteinerNetwork$ ($\kDTSN$) is the same problem except that the edges are directed, and a demand $(a,b,t)$ must be satisfied by a directed path from $a$ to $b$ in $G_t$.


\subsection{Our Results}
In \autoref{section:hardness_of_temporal_steiner_problems}, we show a strong inapproximability result for $\kTemporalSteinerNetwork$ ($\kTSN$) and its directed version $\kDTSN$:

\begin{restatable}[Main Theorem]{theorem}{inapproximabilityforgeneralk} \label{theorem:inapproximability_for_general_k}
$\kTSN$ and $\kDTSN$ are $\NP$-hard to approximate to a factor of $k - \epsilon$ for every $k \geq 2$ and every constant $\epsilon > 0$. For $\kDTSN$, this holds even when the underlying graph is acyclic.
\end{restatable}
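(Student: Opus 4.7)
I will reduce from $\kPHLC$, the natural $k$-partite analogue of \LabelCover: an instance has $k$ vertex parts $V_1,\dots,V_k$ with label set $\Sigma$ and a set of arity-$k$ hyperedge constraints, each specifying an allowed subset of label tuples in $\Sigma^k$. Standard parallel repetition of \textsc{3-SAT} hardness gives, for every $\delta>0$, instances where it is \NP-hard to distinguish YES instances (fully satisfiable) from NO instances (at most a $\delta$-fraction of hyperedges satisfiable). I will transform such an instance $\Phi$ into a $\kDTSN$ instance with $k$ demands $(a_i,b_i,i)$, one per frame, whose optimum is $\approx 1$ in the YES case and $\ge k - \epsilon$ in the NO case; the undirected lower bound will follow from essentially the same construction.

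The underlying directed graph is built in layers (hence acyclic) whose ``basic units'' are \emph{candidate paths}, one for every hyperedge of $\Phi$ equipped with a satisfying label tuple $(\ell_1,\dots,\ell_k)$. Each candidate runs through label-encoding nodes $(v,\ell)$ with $v\in V_j$, $\ell\in\Sigma$; in frame $t$, only the edges consistent with the $t$-th coordinate of some label tuple are active, while edges corresponding to other parts remain available in that frame for \emph{sharing} across demands. I will wire things so that a candidate is an $a_i$-$b_i$ path in frame $i$ precisely when its $i$-th coordinate is consistent with a labeling of $V_i$, and so that a single candidate serves all $k$ demands simultaneously iff its label tuple satisfies a hyperedge of $\Phi$. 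The cost of each candidate is concentrated on a single bottleneck edge of weight~$1$, with distinct candidates using distinct bottlenecks and all other edges carrying negligible weight.

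The analysis then proceeds as follows. YES case: any satisfying labeling of $\Phi$ selects one candidate usable in every frame, giving a feasible subgraph of cost $1+o(1)$. NO case: any feasible solution must route demand $i$ along some candidate in frame $i$; if two demands shared the same candidate, that candidate would encode a satisfying hyperedge, which by soundness can happen for at most a $\delta$-fraction of choices. Hence, up to a $\delta$-fraction absorbable into the slack, the $k$ demands require $k$ distinct candidates and therefore $k$ distinct bottleneck edges, yielding cost at least $k-\epsilon$ for $\delta$ small enough. For undirected \kTSN, I will use the same gadget and, if necessary, inflate the weights of auxiliary edges to kill any shortcut routes created by dropping orientations; since the layered structure already forbids genuine cycles from helping, this should go through with only cosmetic changes.

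The main obstacle I anticipate is pinning the gap at the \emph{tight} value $k-\epsilon$, as opposed to a weaker $\Omega(k)$ or $\Omega(\log k)$ bound. The danger is that a NO-case solver might ``mix and match'' fragments of many candidates, satisfying several demands while evading the one-bottleneck-per-demand penalty. To rule this out I will need a weight assignment in which every feasible $a_i$-$b_i$ path in frame $i$ must cross exactly one bottleneck edge, together with a soundness argument showing that the $k$ bottlenecks chosen across the $k$ frames are almost surely distinct; balancing these two constraints while keeping the YES-case cost at roughly $1$ is where the construction will require the most care.
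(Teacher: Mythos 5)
Your high-level plan---reduce from a $k$-partite hypergraph form of $\LabelCover$ with one demand per part per frame, and let agreeing labels share unit-weight edges---is the same as the paper's. However, there is a genuine gap in the soundness condition you invoke, and it is exactly the point on which the tightness of the bound hinges. You assume NO instances in which at most a $\delta$-fraction of hyperedges are (fully) satisfiable by any labeling. Under that guarantee the only thing you can rule out is that all $k$ demands share a single merged edge in a given hyperedge gadget; it does nothing to prevent the demands from \emph{partially} sharing. Concretely, in a NO instance there can still be a labeling under which, for almost every hyperedge, the projections of parts $1,\dots,k-1$ all agree while part $k$ disagrees: the hyperedge is not satisfied, yet $k-1$ of the paths can merge onto one contact edge, and the total cost is only about twice the YES cost. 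So standard ``fraction of hyperedges satisfiable'' soundness recovers only the $2-\epsilon$ bound, not $k-\epsilon$. What the reduction actually requires is the much stronger soundness of Feige's $k$-prover system: in the NO case, every labeling \emph{weakly} satisfies (makes even one pair of the $k$ projections agree on) at most an $\epsilon$-fraction of hyperedges. That pairwise guarantee is what forces each of the $k$ paths to be almost entirely disjoint from every other path, giving cost at least $(1-\epsilon)k$ times the YES optimum. Relatedly, your claim that two demands sharing a candidate ``would encode a satisfying hyperedge'' conflates the candidate's tuple being satisfying (true of every candidate by construction) with the \emph{induced labeling} satisfying the hyperedge; sharing between demands $i$ and $j$ only certifies agreement on coordinates $i$ and $j$, i.e., weak satisfaction, which is precisely why the weak-satisfaction soundness is the right hypothesis.

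A second problem is your cost structure: YES cost $1+o(1)$ with every feasible $a_i$-$b_i$ path crossing exactly one unit-weight bottleneck. If demand $i$'s path traverses only a single hyperedge-candidate gadget, nothing ties together the labels it implicitly assigns to different vertices of $V_i$, so no global labeling is induced and the $\LabelCover$ soundness (a statement about labelings) cannot be applied at all. Worse, in a NO instance each individual hyperedge typically still admits locally satisfying tuples, so all $k$ demands could route through one such tuple's gadget at cost about $1$, destroying the gap entirely. The fix, which is the paper's design, is to make each demand's path a chain through a gadget for \emph{every} vertex of its part and, inside each vertex gadget, a sub-bundle for every incident hyperedge; the path then crosses one contact edge per hyperedge, induces a full labeling of that part, and the YES cost is $|E|$ rather than $1$ (which is harmless, since only the ratio matters).
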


Thus the best approximation ratio one can hope for is $k$, which is easily achieved by taking the union of shortest paths for each demand. This contrasts with the static Steiner network problems, which have nontrivial approximation algorithms and efficient fixed-parameter algorithms. Our proof is via a reduction from Feige's $k$-prover system \cite{feige1998threshold}, which can be viewed as a $\LabelCover$ problem on partite hypergraphs.

In \autoref{section:monotonic_special_cases}, we discuss a broad class of special cases in which the edges are \emph{monotonic}: once an edge exists, it exists for all future times. We observe that monotonic $\kTSN$ is essentially equivalent to the well-studied $\PrioritySteinerTree$ problem, and inherits the approximability bounds for that problem.

\begin{theorem} \label{theorem:monotonic_kTSN_approximability}
$\Monotonic$ $\kTSN$ has a polynomial-time $O(\log k)$-approximation algorithm. It has no $\Omega(\log \log n)$-approximation algorithm unless $\NP \in \DTIME(n^{\log \log \log n})$.
\end{theorem}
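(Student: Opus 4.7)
The plan is to establish approximation-preserving reductions between $\Monotonic$ $\kTSN$ and $\PrioritySteinerTree$ (PST), and then inherit the known approximability and inapproximability bounds in both directions. For the upper bound, given a $\Monotonic$ $\kTSN$ instance I would first define, for each edge $e$ in the underlying graph, its birth time $b(e) := \min\{t : e \in E_t\}$. Monotonicity guarantees that $e \in E_t$ if and only if $b(e) \leq t$, so reversing the time axis via $p(e) := T - b(e) + 1$ yields a priority such that $e$ is usable by a demand $(a_i, b_i, t_i)$ exactly when $p(e) \geq T - t_i + 1$. This is precisely the feasibility condition of a priority Steiner instance. To bridge the single-source nature of the usual PST formulation with the pairwise demands of $\kTSN$, I would introduce a super-source $r$ joined to each distinct source $a_i$ by a weight-zero edge of maximum priority, and replace each demand $(a_i, b_i, t_i)$ by $(r, b_i, T - t_i + 1)$.

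The main subtlety, which I expect to be the principal technical obstacle, is that an optimal PST tree rooted at $r$ could in principle route the demand for $b_i$ via a different source $a_j$, producing a solution that is not feasible for the original $\kTSN$ instance. Two standard approaches resolve this: either use private source copies $a_i'$ per demand so that the mapping from super-source edges to demands is one-to-one (yielding the Priority Steiner Forest variant, which inherits the Charikar--Naor--Schieber bound), or argue by shortcutting that any cross-routed tree can be converted, without cost increase, into one that respects the original source--sink pairs. Either way, the $O(\log k)$-approximation of \cite{charikar2004resource} transfers to give the claimed upper bound.

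For the lower bound, I would run the reduction in reverse. Given a PST instance with priority levels $1 < 2 < \ldots < P$, construct a $\Monotonic$ $\kTSN$ instance with $T = P$ frames by placing an edge of PST priority $p$ into every frame $G_t$ with $t \geq P - p + 1$; monotonicity is automatic. A PST demand at priority $p$ becomes a $\kTSN$ demand at time $P - p + 1$, and feasibility is preserved edge-for-edge. Any $\alpha$-approximation for $\Monotonic$ $\kTSN$ thus yields an $\alpha$-approximation for PST, so the known $\Omega(\log \log n)$ inapproximability of PST under the assumption $\NP \not\subseteq \DTIME(n^{\log \log \log n})$ transfers, completing the proof.
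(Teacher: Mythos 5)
Your overall strategy---a two-way, approximation-preserving translation between $\Monotonic$ $\kTSN$ and $\PrioritySteinerTree$ with priorities playing the role of (reversed) times, inheriting the $O(\log k)$ upper bound of \cite{charikar2004resource} and the $\Omega(\log\log n)$ lower bound of \cite{chuzhoy2008approximability}---is exactly the paper's route (Lemma~\ref{lemma:monotonic_kTSN_and_priority_steiner_tree_are_equivalent}). Your lower-bound direction matches the paper's essentially verbatim, up to the cosmetic reversal of the priority axis and the omission of one small step: the priority instance is a multigraph, so parallel edges must be subdivided (each into two half-weight edges) before the result is a legitimate $\kTSN$ instance.

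The genuine gap is in your upper bound. The paper's $\PrioritySteinerTree$ is defined with arbitrary demand \emph{pairs} $(a_i,b_i)$ and a forest as output, and it is for this pairwise version that the $O(\log k)$ bound is invoked; consequently the paper's reduction is a direct relabeling of times as priorities, with no super-source. You instead target a single-source formulation via a super-source $r$ attached by zero-weight, maximal-priority edges to every $a_i$, and you correctly flag that this is unsound: a feasible tree may connect $b_i$ to $r$ through some other source $a_j$, which certifies nothing about $a_i$--$b_i$ connectivity in the original instance. Neither of your proposed repairs closes this. ``Shortcutting'' cannot convert an $r$--$a_j$--$b_i$ connection into an $a_i$--$b_i$ connection without adding edges, so there is no cost-preserving fix of a cross-routed tree. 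And routing each demand through a private copy $a_i'$ is just a re-encoding of the pairwise-demand (``priority forest'') problem---i.e., of $\PrioritySteinerTree$ exactly as the paper defines it---so asserting that this variant ``inherits the Charikar--Naor--Schieber bound'' is precisely the claim that needs justification rather than a resolution of it. If you are willing to take the $O(\log k)$ bound for the pairwise version as given (as the paper does by citation), the super-source detour is unnecessary; if you are not, your argument does not establish the upper bound.
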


For monotonic $\kDTSN$ with a single source (that is, every demand is of the form $(r,b,t)$ for a common root node $r$), we show the following:

\begin{restatable}{theorem}{monotonicsinglesourcekDTSNapproximability} \label{theorem:monotonic_singlesource_kDTSN_approximability}
$\Monotonic$ $\SingleSource$ $\kDTSN$ has a polynomial-time $O(k^\epsilon)$-approximation algorithm for every $\epsilon>0$. It has no $\Omega(\log^{2-\epsilon} n)$-approximation algorithm unless $\NP \in \ZPTIME(n^{\polylog(n)})$.
\end{restatable}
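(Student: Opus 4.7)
The plan is to prove the matching upper and lower bounds by reducing between $\Monotonic$ $\SingleSource$ $\kDTSN$ and $\DirectedSteinerTree$ ($\DST$) in both directions.

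For the \emph{lower bound}, I would observe that $\DST$ is exactly the $T=1$ special case of $\Monotonic$ $\SingleSource$ $\kDTSN$: with a single time frame every edge is available at every time, so a demand $(r, b_i, 1)$ is just an ordinary $r$-to-$b_i$ reachability demand and the problem reduces to finding a minimum-weight subgraph containing a directed $r$-to-$b_i$ path for each $i$. Any $\alpha(n)$-approximation for the monotonic single-source temporal problem immediately yields one for $\DST$, and the $\Omega(\log^{2-\epsilon} n)$ hardness of $\DST$ due to Halperin and Krauthgamer~\cite{halperin2003polylogarithmic}, conditional on $\NP \not\subseteq \ZPTIME(n^{\polylog(n)})$, transfers verbatim.

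For the \emph{upper bound}, the plan is an approximation-preserving reduction to $\DST$ followed by the $O(k^\epsilon)$-approximation of Charikar et al.~\cite{charikar1999approximation}. Given an instance with root $r$, edge weights $w(e)$ and arrival times $\tau(e)$, and demands $\{(r, b_i, t_i)\}_{i \in [k]}$, I would build a layered $\DST$ instance on the vertex set $\{(v, i) : v \in V, i \in [k]\} \cup \{\hat{e}_{\mathrm{in}}, \hat{e}_{\mathrm{out}} : e \in E\} \cup \{r^*\}$, with super-root $r^*$, terminals $(b_i, i)$, and zero-weight arcs $r^* \to (r, i)$ for every $i$. Each original edge $e = (u, v)$ contributes a ``purchase gadget'' whose only weighted arc is $\hat{e}_{\mathrm{in}} \to \hat{e}_{\mathrm{out}}$ of weight $w(e)$; in addition, for every demand $i$ with $\tau(e) \leq t_i$ we add zero-weight arcs $(u, i) \to \hat{e}_{\mathrm{in}}$ and $\hat{e}_{\mathrm{out}} \to (v, i)$. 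Since every $\DST$ solution can be taken to be an arborescence, each purchase gadget is crossed at most once and the $\DST$ objective equals the total weight of purchased original edges. Routing each $\kDTSN$ demand within its own $(\cdot, i)$-layer while sharing gadgets across layers turns a $\kDTSN$ solution of weight $W$ into a $\DST$ arborescence of weight at most $W$, so applying the Charikar et al.\ algorithm on the $k$-terminal $\DST$ instance should yield the $O(k^\epsilon)$ ratio.

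The \emph{main obstacle} is the decoding direction: given a $\DST$ arborescence, produce a valid $\kDTSN$ solution of no greater weight. The danger is that a root-to-$(b_i, i)$ path in the arborescence may leave and re-enter layers through shared purchase gadgets, so one must verify that the projected walk from $r$ to $b_i$ in the underlying graph uses only edges of $E_{t_i}$. The gating condition $\tau(e) \leq t_i$ on arcs incident to layer $i$ guarantees this for the final step only, and extending it to every edge traversed appears to require either indexing layers by demand times rather than demand indices (so that entering a tighter-time layer tightens the constraint on every preceding edge) or an auxiliary structural lemma that rewrites any arborescence into one whose layer structure respects each demand's time constraint throughout. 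Getting this bookkeeping right, without blowing up costs by more than a constant factor, is the technical heart of the proof.
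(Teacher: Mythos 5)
Your lower bound is exactly the paper's: $\DST$ is the $T=1$ (equivalently, the ``all demands at the final time'') special case of $\Monotonic$ $\SingleSource$ $\kDTSN$, so the Halperin--Krauthgamer hardness transfers directly. Your upper-bound strategy---a layered, approximation-preserving reduction to $\DST$ followed by the Charikar et al.\ algorithm---is also the paper's strategy. But your specific construction has a genuine gap, and it is the one you yourself flag: with shared purchase gadgets $\hat{e}_{\mathrm{in}} \to \hat{e}_{\mathrm{out}}$ fanning out to every layer $j$ with $\tau(e) \leq t_j$, an arborescence's root-to-$(b_i,i)$ path can pass through layers $j$ with $t_j > t_i$, so the projected walk from $r$ to $b_i$ may use edges absent from $E_{t_i}$. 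Decoding then fails, and since the whole point is to run an approximation algorithm on the $\DST$ instance and decode its (non-optimal) output, this is fatal, not cosmetic. The fix is more than bookkeeping: the paper abandons shared gadgets entirely. It builds one full-weight copy of $E_t$ inside layer $t$, indexes layers by \emph{time}, makes the only inter-layer arcs the zero-weight forward arcs $(v^t, v^{t+1})$, and places demand $i$'s terminal at layer $t_i$. Then every root-to-terminal path visits layers in non-decreasing order, so every edge used lies in some $E_j$ with $j \leq t_i$, hence in $E_{t_i}$ by monotonicity---decoding becomes trivial.

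The cost of that redesign is that the \emph{encoding} direction, which your gadget made trivial (pay $w(e)$ once no matter how many demands use $e$), becomes the hard part: a na\"ive embedding of a $\kDTSN$ solution into the layered graph could pay for the same underlying edge at several layers. The paper handles this with a structural lemma (\autoref{claim:directed_tree_solution}): any monotonic single-source solution can be pruned to a directed tree in which each edge is assigned its \emph{earliest necessary time}, and these times are non-decreasing along every root-to-target path. Monotonicity is used essentially here (a vertex already reached by an earlier-time path needs no second in-edge), and the paper notes the claim fails for multi-source demands. Each tree edge is then placed only at its earliest-necessary-time layer, with free forward arcs stitching consecutive layers, giving a $\DST$ solution of no greater cost. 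So your proposal correctly locates the difficulty but does not resolve it; resolving it requires both the alternative layered construction and this tree-restructuring lemma, neither of which is supplied by the ``tighten the gating condition'' fix you sketch.
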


This is achieved via approximation-preserving reductions to and from $\DirectedSteinerTree$, which is known to have an $O(k^\epsilon)$-approximation scheme~\cite{charikar1999approximation} and a $\Omega(\log^{2-\epsilon} n)$ lower bound~\cite{halperin2003polylogarithmic}. We also devise an explicit approximation algorithm achieving the same upper bound.

Finally for the general $\kDTSN$ problem, we present an integer linear program in \autoref{section:exact_algorithm}. We make an implementation of this publicly available at \url{https://github.com/YosefLab/dynamic_connectivity}.

\section{Preliminaries}

Note that the formulations of $\kTSN$ and $\kDTSN$ in the Introduction involved a fixed vertex set; only the edges change over the times or frames. It is also natural to formulate the temporal Steiner network problem with nodes changing over time, or both nodes and edges. However by the following proposition, it is no loss of generality to discuss only the edge-temporal variant.

\begin{restatable}{proposition}{variantsmutuallyreducible} \label{proposition:variants_mutually_reducible}
The edge, node, and node-and-edge variants of $\kTSN$ are mutually polynomial-time reducible via strict reductions (i.e. preserving the approximation ratio exactly). Similarly all three variants of $\kDTSN$ are mutually strictly reducible.
\end{restatable}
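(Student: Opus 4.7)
The plan is to exhibit, for each pair of variants, polynomial-time reductions that preserve the weight of every feasible solution, so that the approximation ratio is preserved exactly. In both the undirected and directed settings, the edge-variant and the node-variant are immediate special cases of the node-and-edge variant: given an edge-variant instance, build a node-and-edge instance in which every node is present in every frame; given a node-variant instance, build a node-and-edge instance in which every edge is present exactly when both its endpoints are. Both of these reductions leave edges and weights unchanged and are therefore strict.

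The substantive direction is the reduction from the node-and-edge variant to the edge variant. For $\kDTSN$, I would use the standard node-splitting gadget adapted to the temporal setting: replace each vertex $v$ by two vertices $v_{\mathrm{in}}$ and $v_{\mathrm{out}}$ linked by a zero-weight edge $e_v$ that is present in frame $t$ iff $v$ is present in frame $t$; reroute each original arc $(u,v)$ to $(u_{\mathrm{out}}, v_{\mathrm{in}})$, keeping its weight and temporal presence pattern; and translate each demand $(a,b,t)$ to $(a_{\mathrm{in}}, b_{\mathrm{out}}, t)$. A directed $a$-to-$b$ path in frame $t$ of the original instance lifts to a directed $a_{\mathrm{in}}$-to-$b_{\mathrm{out}}$ path in frame $t$ of the new instance by inserting $e_v$ at every intermediate $v$, and conversely, any $a_{\mathrm{in}}$-to-$b_{\mathrm{out}}$ path must cross exactly one $e_v$ per intermediate node, which witnesses that $v$ is available at time $t$. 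Since the inserted edges have weight zero, total subgraph weights match on the nose.

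For the undirected problem $\kTSN$, I would use a slightly enlarged gadget to avoid having to orient edges. Replace each $v$ by a hub $v^*$ together with peripheral copies $v^{(1)}, \dots, v^{(\deg v)}$, joined to $v^*$ by zero-weight edges that are present iff $v$ is present; attach the $i$-th incident edge of $v$ in the original graph to $v^{(i)}$ rather than to $v$; and translate each demand $(a,b,t)$ to $(a^*, b^*, t)$. Any walk in the original graph that enters $v$ along its $i$-th incident edge and exits along its $j$-th lifts uniquely to a walk through $v^{(i)}$--$v^{*}$--$v^{(j)}$, which is available exactly when $v$ is present, and any walk in the new instance projects back to the original by contracting the hubs. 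Again all added edges have weight zero, so weights are preserved exactly.

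The main obstacle is verifying that the undirected hub gadget does not introduce shortcut paths that were unavailable in the source instance: I would need to check that every traversal through a hub $v^*$ corresponds to a genuine traversal of $v$ at the same time $t$, and that the translation of demand endpoints lets any feasible subgraph on one side be pulled back to a feasible subgraph of identical weight on the other side. Once this faithfulness is in hand, the three strict reductions compose into the cycle of mutual reducibility, giving the proposition in both the undirected and directed cases.
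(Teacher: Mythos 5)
Your gadgets themselves are sound: the directed in/out node-splitting and the undirected hub construction both preserve weights exactly and faithfully encode node presence (each traversal of a hub $v^*$ or of a zero-weight edge $(v_{\mathrm{in}},v_{\mathrm{out}})$ at time $t$ certifies $v\in V_t$), so they give a correct strict reduction from the node-and-edge variant to the edge variant. This is a genuinely different route from the paper, which never splits nodes: it reduces node-and-edge to node by subdividing each edge with an intermediate node present exactly when the edge is, and reduces node to edge by letting each edge exist at the times $\tau(u)\cap\tau(v)$. However, there is a gap in the overall logic: the three reductions you provide --- edge to node-and-edge, node to node-and-edge, and node-and-edge to edge --- do not compose into mutual reducibility among all three variants. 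In the reducibility digraph you have built, the node variant has only outgoing arcs; you have shown it is no harder than the other two, but nothing in your proposal reduces the edge or node-and-edge variant \emph{to} the node variant, so ``mutually reducible'' is not established. This missing direction is not cosmetic: it is exactly the one the paper relies on later, when \autoref{lemma:reduction_to_simple} assumes without loss of generality that the input is a node-temporal instance.

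The fix is short and is the paper's first gadget: to reduce the node-and-edge (hence also the edge) variant to the node variant, replace each edge $(u,v)$, present at times $\tau(u,v)$, by a new intermediate node $x_{(u,v)}$ present exactly at the times $\tau(u,v)$, with an edge $(u,x_{(u,v)})$ carrying the original weight $w(u,v)$ and a zero-weight edge $(x_{(u,v)},v)$. A path may cross from $u$ to $v$ at time $t$ in the new instance iff $x_{(u,v)}$ exists at time $t$, i.e.\ iff $(u,v)\in E_t$ originally, and solutions correspond with identical cost. Adding this reduction to your three closes the cycle and yields the full proposition.
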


We defer the precise definitions of the other two variants, as well as the proof of this proposition, to \autoref{section:problem_variants}.

Next we state the $\LabelCover$ problem, which is the starting point of one of our reductions to $\kTSN$.

\begin{definition}[$\LabelCover$ ($\LC$)]
An instance of this problem consists of a bipartite graph $G = (U,V,E)$ and a set of possible labels $\Sigma$. The input also includes, for each edge $(u,v) \in E$, \emph{projection functions} $\pi_u^{(u,v)} : \Sigma \to C$ and $\pi_v^{(u,v)} : \Sigma \to C$, where $C$ is a common set of colors; $\Pi = \{\pi_v^e : e \in E, v \in e \}$ is the set of all such functions. A labeling of $G$ is a function $\phi : U \cup V \to \Sigma$ assigning each node a label. We say a labeling $\phi$ \emph{satisfies} an edge $(u,v) \in E$, or $(u,v)$ is \emph{consistent} under $\phi$, iff $\pi_u^{(u,v)}\left(\phi(u)\right) = \pi_v^{(u,v)}\left(\phi(v)\right)$. The task is to find a labeling that satisfies as many edges as possible.
\end{definition}

This slightly generalizes the original definition in \cite{arora1993hardness}. It has the following gap hardness, which follows by combining the PCP theorem \cite{arora1998proof} with Raz's parallel repetition theorem \cite{raz1998parallel}.

\begin{theorem}\label{theorem:label_cover_hardness}
For every $\epsilon > 0$, there is a constant $|\Sigma|$ such that the following promise problem is $\NP$-hard: Given a $\LabelCover$ instance $(G, \Sigma, \Pi)$, distinguish between the following cases:
\begin{itemize}
	\item (YES instance) There exists a \emph{total labeling} of $G$; i.e. a labeling that satisfies every edge.
    \item (NO instance) There does not exist a labeling of $G$ that satisfies more than $\epsilon |E|$ edges.
\end{itemize}
\end{theorem}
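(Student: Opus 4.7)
The plan is to build the hard Label Cover instance in two stages: first, use the PCP theorem to obtain an instance with some \emph{fixed} constant soundness gap, and then amplify the gap to the desired $\epsilon$ via Raz's parallel repetition theorem, keeping the alphabet size constant throughout. Completeness (the YES case) will be preserved at every stage, and the NO case will improve monotonically.

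First, I would invoke the PCP theorem in its Max-3SAT form: there is a constant $\delta>0$ such that it is $\NP$-hard to distinguish satisfiable 3CNF formulas $\varphi$ from formulas in which at most a $(1-\delta)$-fraction of clauses are simultaneously satisfiable. From such a $\varphi$ I construct a canonical two-prover one-round game, realized as a $\LabelCover$ instance $(G_0, \Sigma_0, \Pi_0)$: put clauses of $\varphi$ on one side $U$, variables on the other side $V$, and edges $(C,x)$ whenever $x$ appears in $C$. Labels at clause nodes are satisfying partial assignments to the three variables of the clause; labels at variable nodes are truth values. The projection $\pi_C^{(C,x)}$ restricts a clause-assignment to its $x$-coordinate, and $\pi_x^{(C,x)}$ is the identity, both mapping into the common color set $C = \{0,1\}$. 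A standard calculation shows that in the NO case, no labeling satisfies more than a $(1-\delta')$-fraction of edges for some constant $\delta'>0$ depending on $\delta$.

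Second, given a target $\epsilon>0$, I apply Raz's parallel repetition theorem to this base game. The $\ell$-fold repetition $(G_0, \Sigma_0, \Pi_0)^{\otimes \ell}$ is again naturally a $\LabelCover$ instance: vertex sets $U^\ell$ and $V^\ell$, an edge between $(u_1,\dots,u_\ell)$ and $(v_1,\dots,v_\ell)$ whenever each $(u_i,v_i)$ is an edge of $G_0$, label set $\Sigma_0^\ell$, and projections into $C^\ell$ defined coordinatewise from $\Pi_0$. Completeness is trivial: a total labeling of $G_0$ yields a total labeling of the product. For soundness, Raz's theorem gives a constant $c=c(\delta',|\Sigma_0|)<1$ such that the optimum value drops to at most $c^\ell$. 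Choosing $\ell$ large enough that $c^\ell \le \epsilon$ makes the soundness at most $\epsilon |E|$, while the alphabet $|\Sigma| = |\Sigma_0|^\ell$ remains a constant depending only on $\epsilon$. The reduction is polynomial-time since $\ell$ is constant.

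The only delicate point is bookkeeping: the problem statement allows arbitrary projection functions into a color set $C$, so I must check that the coordinatewise construction still fits the generalized template (it does, with color set $C^\ell$), and that parallel repetition is being applied to the correct notion of game value. No new ideas beyond the two cited theorems are needed; the main obstacle is purely notational, ensuring that the reduction from 3SAT to the base Label Cover instance respects the bipartite/projection structure required by the definition used here.
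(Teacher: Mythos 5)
Your proposal is correct and follows essentially the same route as the paper, which proves this theorem simply by citing the combination of the PCP theorem with Raz's parallel repetition theorem; your write-up just makes explicit the standard clause--variable game construction and the choice of a constant number of repetitions that this citation implicitly relies on. The bookkeeping points you flag (coordinatewise projections into $C^\ell$, constant alphabet $|\Sigma_0|^\ell$) are handled exactly as you describe and pose no obstacle.
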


In \autoref{section:hardness_of_temporal_steiner_problems}, we use $\LabelCover$ to show $(2 - \epsilon)$-hardness for 2-$\TSN$ and 2-$\DTSN$; that is, when there are only two demands. To prove our main result however, we will actually need a generalization of $\LabelCover$ to partite hypergraphs, called $\kPartiteHypergraphLabelCover$. Out of space considerations we defer the statement of this problem and its gap hardness to \autoref{section:inapproximability_for_general_k}, where the $(2 - \epsilon)$-hardness result is generalized to show $(k - \epsilon)$-hardness for general number of demands $k$.

\section{Hardness of Temporal Steiner Problems} \label{section:hardness_of_temporal_steiner_problems}

\subsection{Overview of the Reduction}
We first outline our strategy for reducing $\LabelCover$ to the temporal Steiner problems; specifically, we reduce to 2-$\TSN$. A similar hardness for $\kTSN$ is obtained by using the same ideas, but reducing from $\kPartiteHypergraphLabelCover$.

Consider the nodes $u_1, \ldots, u_{|U|}$ on the ``left'' side of the $\LC$ instance. We build, for each $u_i$, a gadget (which is a small subgraph in the Steiner instance) consisting of multiple parallel directed paths from a source to a sink---one path for each possible label for $u_i$. We then chain together these gadgets, so that the sink of $u_1$'s gadget is the source of $u_2$'s gadget, and so forth. Finally we create a connectivity demand from the source of $u_1$'s gadget to the sink of $u_{|U|}$'s gadget, so that a solution to the Steiner instance must have a path from $u_1$'s gadget, through all the other gadgets, and finally ending at $u_{|U|}$'s gadget. This path, depending on which of the parallel paths it takes through each gadget, induces a labeling of the left side of the $\LC$ instance. We build an analogous chain of gadgets for the nodes on the right side of the $\LC$ instance.

The last piece of the construction is to ensure that the Steiner instance has a low-cost solution if and only if the $\LC$ instance has a consistent labeling. This is accomplished by setting all the $u_i$ gadgets to exist only at time 1 (i.e. in frame $G_1$), setting the $v_j$ gadgets to exist only in $G_2$, and then merging certain edges from the $u_i$-gadgets with edges from the $v_j$-gadgets, replacing them with a single, shared edge that exists in both frames. Intuitively, the edges we merge are from paths that correspond to labels that satisfy the $\LabelCover$ edge constraints. The result is that a YES instance of $\LC$ (i.e. one with a total labeling) will enable a high degree of overlap between paths in the Steiner instance, so that there is a very low-cost solution. On the other hand, a NO instance of $\LC$ will not result in much overlap between the Steiner gadgets, so every solution will be costly.

Let us define some of the building blocks of the reduction we just sketched:
\begin{itemize}
    \item A \emph{bundle} is a graph gadget consisting of a source node $b_1$, sink node $b_2$, and parallel, disjoint \emph{strands} (defined shortly) from $b_1$ to $b_2$.
    \item A \emph{chain} of bundles is a sequence of bundles, with the sink of one bundle serving as the source of another.
    \item A \emph{simple strand} is a directed path of the form $b_1 \to c_1 \to c_2 \to b_2$.
    \item In a simple strand, we say that $(c_1, c_2)$ is the \emph{contact edge}. Contact edges have weight 1; all other edges in our construction have zero weight.
    \item More generally, a \emph{strand} can be made more complicated, by replacing a contact edge with another bundle (or even a chain of them). In this way, bundles can be nested, as shown in \autoref{fig:bundle_example}.
    \item We can \emph{merge} two or more simple strands from different bundles by setting their contact edges to be the same edge, and making that edge existent at the union of all times when the original edges existed (\autoref{fig:merge_example}).
\end{itemize}

\begin{figure}[H]
\centering
\begin{minipage}{.4\textwidth}
	\centering
    \includegraphics[width=\linewidth]{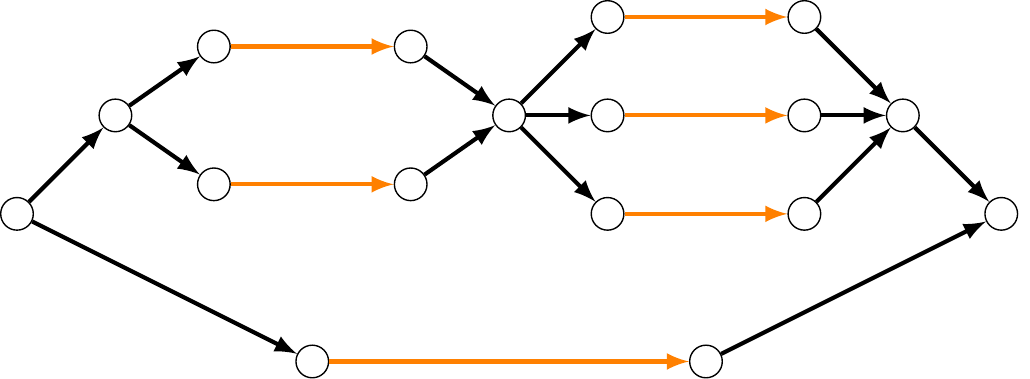}
	\captionof{figure}{A bundle whose upper strand is a chain of two bundles; the lower strand is a simple strand. Contact edges are orange.}
	\label{fig:bundle_example}
\end{minipage}
\hspace{20mm}
\begin{minipage}{.4\textwidth}
	\centering
    \includegraphics[width=0.6\linewidth]{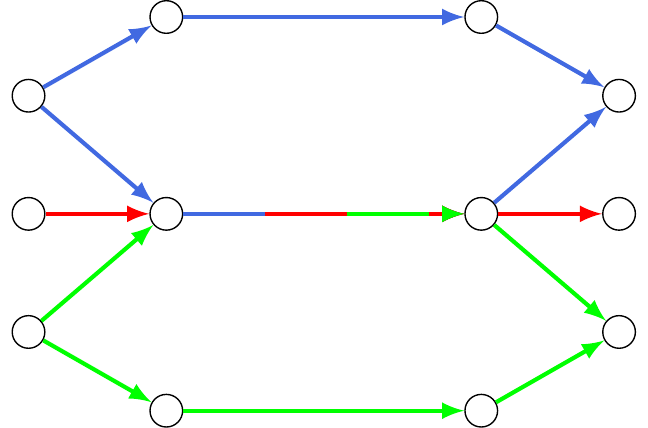}
	\captionof{figure}{Three bundles (blue, green, red indicate different times), with one strand from each merged together.}
	\label{fig:merge_example}
\end{minipage}
\end{figure}

Before formally giving the reduction, we illustrate a simple example of its construction.

\begin{example} \label{example:reduction}
Consider a toy $\LabelCover$ instance whose bipartite graph is a single edge, label set is $\Sigma = \{1,2\}$, and projection functions are shown:

\begin{figure}[H]
	\centering
    \includegraphics[width=0.25\textwidth]{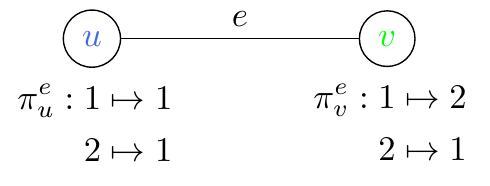}
	\label{fig:example_reduction_LC_instance}
\end{figure}

Our reduction outputs this corresponding 2-$\TSN$ instance:

\begin{figure}[H]
	\centering
    \includegraphics[width=0.4\textwidth]{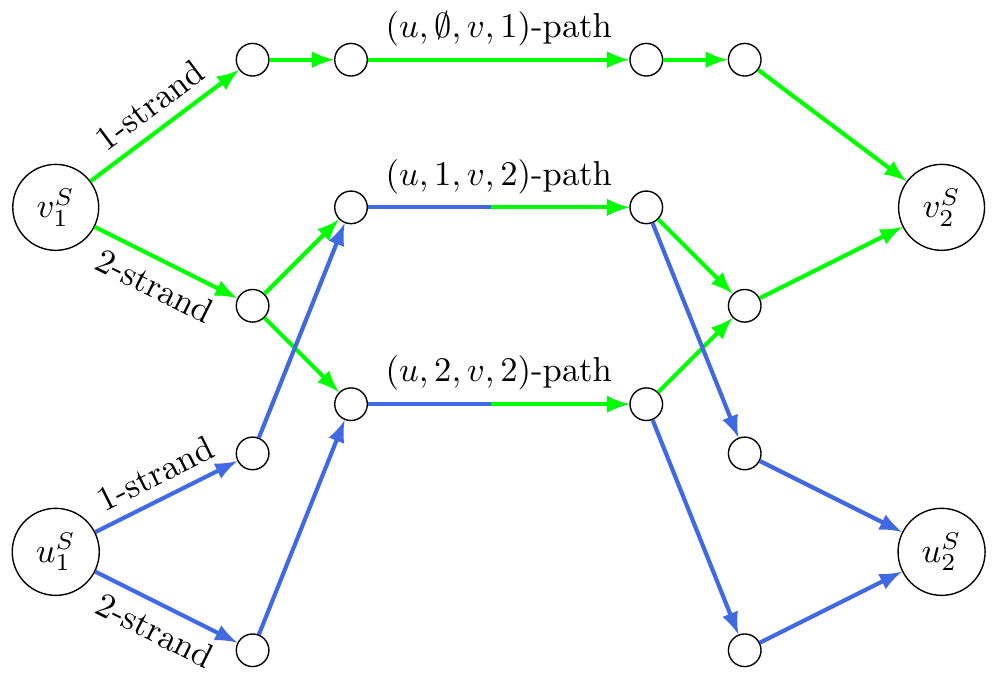}
	\label{fig:example_reduction_Steiner_instance}
\end{figure}

$G_1$ comprises the set of blue edges; $G_2$ is green. The demands are $(u_1^S, u_2^S, 1)$ and $(v_1^S, v_2^S, 2)$. The gadget for the $\LabelCover$ node $u$ (the blue subgraph) consists of two strands, one for each possible label. In the $v$-gadget (green subgraph), the strand corresponding to a labeling of `2' branches further, with one simple strand for each agreeing labeling of $u$. Finally, strands (more precisely, their contact edges) whose labels map to the same color are merged.

The input is a YES instance of $\LabelCover$ whose optimal labelings ($u$ gets either label 1 or 2, $v$ gets label 2) correspond to 2-$\TSN$ solutions of cost 1 (both gadgets traverse the $(u,1,v,2)$-path, or both traverse the $(u,2,v,2)$-path). If this were a NO instance and edge $e$ could not be satisfied, then the resulting 2-$\TSN$ gadgets would have no overlap.
\end{example}

\subsection{Inapproximability for Two Demands}
We now formalize the reduction in the case of two demands; later, we extend this to any $k$.

\begin{theorem} \label{theorem:inapproximability_for_two_demands}
2-$\TSN$ and 2-$\DTSN$ are $\NP$-hard to approximate to within a factor of $2 - \epsilon$ for every constant $\epsilon > 0$. For 2-$\DTSN$, this holds even when the underlying graph is acyclic.
\end{theorem}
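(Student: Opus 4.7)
The plan is to reduce from $\LabelCover$ and invoke its gap hardness (Theorem~\ref{theorem:label_cover_hardness}) at arbitrarily small soundness $\epsilon$. Given an instance $(G=(U,V,E),\Sigma,\Pi)$, I build a $2$-$\TSN$ (respectively $2$-$\DTSN$) instance whose optimum is exactly $|E|$ in the YES case and at least $(2-\epsilon)|E|$ in the NO case, from which the $(2-\epsilon)$-inapproximability is immediate.

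The construction is symmetric on the two sides of $G$. For each $u_i \in U$, create a bundle $B_{u_i}$ whose outer strands are indexed by labels $\ell_u \in \Sigma$; each outer strand is itself a chain of inner bundles, one per LC-edge $e=(u_i,v_j)$ incident to $u_i$; inside every such inner bundle place one simple strand (with a single weight-$1$ contact edge) for each $\ell_v$ with $\pi^e_u(\ell_u)=\pi^e_v(\ell_v)$. Chain $B_{u_1},\ldots,B_{u_{|U|}}$ together into a single $u$-side gadget, and do the analogous construction on the $v$-side. All non-contact $u$-side edges have weight $0$ and live only in $G_1$; the $v$-side lives only in $G_2$. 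For every LC-edge $e$ and every consistent label pair $(\ell_u,\ell_v)$, merge the $u$-side contact edge at $(e,\ell_u,\ell_v)$ with the $v$-side contact edge at $(e,\ell_v,\ell_u)$; merged edges live in both frames. The two demands are from the source to the sink of the $u$-gadget at time $1$, and from the source to the sink of the $v$-gadget at time $2$.

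For the YES case, a total labeling $\phi$ of $G$ yields a feasible subgraph of weight $|E|$: route the $u$-path through the outer strand $\phi(u_i)$ of each $B_{u_i}$, and in each inner bundle for $e=(u_i,v_j)$ pick the simple strand $\ell_v=\phi(v_j)$ (which exists because $\phi$ satisfies $e$); route the $v$-path symmetrically. Both paths traverse precisely the $|E|$ merged contact edges $\{(e,\phi(u_i),\phi(v_j))\}_e$, so the total weight is $|E|$. For the NO case, any feasible $\mathcal{H}$ induces labelings $\phi_U,\phi_V$ (from the outer strands chosen in each bundle) and auxiliary edge-labelings $\phi_{E,U},\phi_{E,V}:E\to\Sigma$ (from the inner strands), each $\phi_{E,\cdot}$ consistent with the corresponding $\phi_\cdot$. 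The $u$-path contributes the $|E|$ distinct contact edges $\{(e,\phi_U(u_i),\phi_{E,U}(e))\}_e$ and the $v$-path contributes $\{(e,\phi_{E,V}(e),\phi_V(v_j))\}_e$; two such edges at the same LC-edge $e$ coincide iff $\phi_U(u_i)=\phi_{E,V}(e)$ and $\phi_V(v_j)=\phi_{E,U}(e)$, which together with the consistency of $\phi_{E,\cdot}$ forces $\pi^e_u(\phi_U(u_i))=\pi^e_v(\phi_V(v_j))$, i.e.\ $e$ is LC-satisfied by $(\phi_U,\phi_V)$. Hence the total overlap is at most the maximum number of LC-edges simultaneously satisfiable, which is $\leq \epsilon|E|$ in a NO instance, giving $w(\mathcal{H}) \geq 2|E|-\epsilon|E|=(2-\epsilon)|E|$.

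Combining the two cases, a polynomial-time $(2-\epsilon')$-approximation for $2$-$\TSN$ would (by taking LC soundness $\epsilon<\epsilon'$) distinguish YES from NO, contradicting Theorem~\ref{theorem:label_cover_hardness}. Because every edge in the construction is oriented from its bundle's source to its sink and merging preserves this orientation, the underlying directed graph is acyclic, so exactly the same construction proves the claim for $2$-$\DTSN$ on acyclic graphs. The main obstacle is making the overlap accounting tight---one must verify that no clever reuse of a single graph edge across distinct inner bundles can drop the cost below $2|E|$ minus the number of satisfiable LC-edges. This holds because merges are parameterized by the LC-edge $e$, so a $u$-side contact edge associated with $e$ is only identified with $v$-side contact edges associated with the same $e$; hence at most one overlap is credited per LC-edge, precisely the regime captured above.
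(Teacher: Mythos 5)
Your construction and analysis are essentially identical to the paper's: the same bundle/strand/merge gadget reduction from $\LabelCover$, the same $|E|$ versus $(2-\epsilon)|E|$ gap, and the same acyclicity observation, with your NO-case accounting via the induced labelings $\phi_U,\phi_V,\phi_{E,U},\phi_{E,V}$ being a slightly more explicit version of the paper's overlap argument. The only detail you omit is that an inner bundle with no consistent label pair must still contain a single dummy simple strand so that every outer strand remains traversable; this is how the paper handles that degenerate case, and it does not affect the gap analysis.
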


\begin{proof}
We describe a reduction from $\LabelCover$ to 2-$\DTSN$ with an acyclic graph. Given the $\LC$ instance $\left( G=(U,V,E), \Sigma, \Pi \right)$, construct a 2-$\DTSN$ instance ($\mathcal{G} = (G_1,G_2)$, along with two connectivity demands) as follows. Create nodes $u_1^S, \ldots, u_{|U|+1}^S$ and $v_1^S, \ldots, v_{|V|+1}^S$. Let there be a bundle from each $u_i^S$ to $u_{i+1}^S$; we call this the \emph{$u_i$-bundle}, since a choice of path from $u_i^S$ to $u_{i+1}^S$ in $\mathcal{G}$ will indicate a labeling of $u_i$ in $G$.

The $u_i$-bundle has a strand for each possible label $\ell \in \Sigma$. Each of these \emph{$\ell$-strands} consists of a chain of bundles---one for each edge $(u_i, v) \in E$. Finally, each such \emph{$(u_i,v)$-bundle} has a simple strand for each label $r \in \Sigma$ such that $\pi_{u_i}^{(u_i,v)}(\ell) = \pi_{v}^{(u_i,v)}(r)$; call this the \emph{$(u_i,\ell,v,r)$-path}. In other words, there is ultimately a simple strand for each possible labeling of $u_i$'s neighbor $v$ such that the two nodes are in agreement under their mutual edge constraint. If there are no such consistent labels $r$, then the $(u_i,v)$-bundle consists of just one simple strand, which is not associated with any $r$. Note that every minimal $u_1^S \to u_{|U|+1}^S$ path (that is, one that proceeds from one bundle to the next) has total weight exactly $|E|$.

Similarly, create a $v_j$-bundle from each $v_j^S$ to $v_{j+1}^S$, whose $r$-strands (for $r \in \Sigma$) are each a chain of bundles, one for each $(u,v_j) \in E$. Each $(u,v_j)$-bundle has a $(u,\ell,v_j,r)$-path for each agreeing labeling $\ell$ of the neighbor $u$, or a simple strand if there are no such labelings.

Set all the edges in the $u_i$-bundles to exist in $G_1$ only. Similarly the $v_j$-bundles exist solely in $G_2$. Now, for each $(u,\ell,v,r)$-path in $G_1$, merge it with the $(u,\ell,v,r)$-path in $G_2$, if it exists. The demands are $\mathcal{D} = \left\{ \left( u_1^S,u_{|U|+1}^S,1 \right), \left( v_1^S,v_{|V|+1}^S,2 \right) \right\}$.

We now analyze the reduction. The main idea is that any $u_i^S \to u_{i+1}^S$ path induces a labeling of $u_i$; thus the demand $\left( u_1^S,u_{|U|+1}^S,1 \right)$ ensures that any 2-$\DTSN$ solution indicates a labeling of all of $U$. Similarly, $\left( v_1^S,v_{|V|+1}^S,2 \right)$ forces an induced labeling of $V$. In the case of a YES instance of $\LabelCover$, these two connectivity demands can be satisfied by taking two paths with a large amount of overlap, resulting in a low-cost 2-$\DTSN$ solution. In contrast when we start with a NO instance of $\LabelCover$, any two paths we can choose to satisfy the 2-$\DTSN$ demands will be almost completely disjoint, resulting in a costly solution. We now fill in the details.

Suppose the $\LabelCover$ instance is a YES instance, so that there exists a labeling $\ell_{u}^*$ to each $u \in U$, and $r_{v}^*$ to each $v \in V$, such that for all edges $(u,v) \in E$, $\pi_u^{(u,v)}(\ell_u^*) = \pi_v^{(u,v)}(r_v^*)$. The following is an optimal solution $\mathcal{H}^*$ to the constructed 2-$\DTSN$ instance:
\begin{itemize}
	\item To satisfy the demand at time 1, for each $u$-bundle, take a path through the $\ell_u^*$-strand. In particular for each $(u,v)$-bundle in that strand, traverse the $(u,\ell_u^*,v,r_v^*)$-path.
    \item To satisfy the demand at time 2, for each $v$-bundle, take a path through the $r_v^*$-strand. In particular for each $(u,v)$-bundle in that strand, traverse the $(u,\ell_u^*,v,r_v^*)$-path.
\end{itemize}
In tallying the total edge cost, $\mathcal{H}^* \cap G_1$ (i.e. the subgraph at time 1) incurs a cost of $|E|$, since one contact edge in $\mathcal{G}$ is encountered for each edge in $G$. $\mathcal{H}^* \cap G_2$ accounts for no additional cost, since all contact edges correspond to a label which agrees with some neighbor's label, and hence were merged with the agreeing contact edge in $\mathcal{H}^* \cap G_1$. Clearly a solution of cost $|E|$ is the best possible, since every $u_1^S \to u_{|U|+1}^S$ path in $G_1$ (and every $v_1^S \to v_{|V|+1}^S$ path in $G_2$) contains at least $|E|$ contact edges.

Conversely suppose we started with a NO instance of $\LabelCover$, so that for any labeling $\ell_u^*$ to $u$ and $r_v^*$ to $v$, for at least $(1-\epsilon)|E|$ of the edges $(u,v) \in E$, we have $\pi_u^{(u,v)}(\ell_u^*) \neq \pi_v^{(u,v)}(r_v^*)$. By definition, any solution to the constructed 2-$\DTSN$ instance contains a simple $u_1^S \to u_{|U|+1}^S$ path $P_1 \in G_1$ and a simple $v_1^S \to v_{|V|+1}^S$ path $P_2 \in G_2$. $P_1$ alone incurs a cost of exactly $|E|$, since one contact edge in $\mathcal{G}$ is traversed for each edge in $G$. However, $P_1$ and $P_2$ share at most $\epsilon |E|$ contact edges (otherwise, by the merging process, this implies that more than $\epsilon |E|$ edges could be consistently labeled, which is a contradiction). Thus the solution has a total cost of at least $(2 - \epsilon)|E|$.

The directed temporal graph we constructed is acyclic, as every edge points ``to the right'' as in Example~\ref{example:reduction}. It follows from the gap between the YES and NO cases that 2-$\DTSN$ is $\NP$-hard to approximate to within a factor of $2 - \epsilon$ for every $\epsilon > 0$, even on DAGs. Finally, note that the same analysis holds for 2-$\TSN$, by simply making every edge undirected; however in this case the graph is clearly not acyclic.
\end{proof}

\subsection{Inapproximability for General \textit{k}}

\inapproximabilityforgeneralk*
\begin{proof}
We perform a reduction from $\kPartiteHypergraphLabelCover$, a generalization of $\LabelCover$ to hypergraphs, to $\kTSN$, or $\kDTSN$ with an acyclic graph. Using the same ideas as in the $k=2$ case, we design $k$ demands composed of parallel paths corresponding to labelings, and merge edges so that a good global labeling corresponds to lots of overlaps between those paths. The full proof is left to \autoref{section:inapproximability_for_general_k}.
\end{proof}

Note that a $k$-approximation algorithm is to simply choose $\mathcal{H} = \bigcup_t \tilde{P_t}$, where $\tilde{P_t}$ is the shortest $a_t \to b_t$ path in $G_t$. Thus by \autoref{theorem:inapproximability_for_general_k}, essentially no better approximation is possible in terms of $k$ alone. In contrast, most classic Steiner problems have good approximation algorithms, or are even exactly solvable for constant $k$.

\section{Monotonic Special Cases} \label{section:monotonic_special_cases}
In light of the tight lower bound, in this section we consider more tractable special cases of the temporal Steiner problems. Perhaps the simplest scenario in which an $o(k)$-approximation is possible is when $T \ll k$, in which case we can approximate individual static Steiner instances using the best known algorithms and combine them. A more interesting and quite natural restriction is that the changes over time are \emph{monotonic}:

\begin{definition}[$\Monotonic$ $\kTSN$ and $\kDTSN$]
In this special case of $\kTSN$ or $\kDTSN$, we have that for each $e \in E$ and $t \in [T]$, if $e \in G_t$, then $e \in G_{t'}$ for all $t' \geq t$.
\end{definition}

This is a natural extension of $\kDTSN$ based on our motivation from network biology. Oftentimes, once a protein (node) becomes active, it remains active for the remainder of the time span being considered. As the node and edge variants are equivalent (\autoref{proposition:variants_mutually_reducible}), these monotonic cases are good models for known biological observations.

Note also that this notion of monotonicity is analogous to the \emph{incremental} model often studied in online algorithms, in which graph elements can be added but not deleted (it is also analogous to the decremental model, since we can reverse the order of the frames). We now examine its effect on the complexity of the temporal Steiner problems.

\subsection{Monotonicity in the Undirected Case}
In the undirected case, monotonicity has a simple effect: it makes $\kTSN$ equivalent to the following well-studied problem:

\begin{definition}[$\PrioritySteinerTree$ \cite{charikar2004resource}]
The input is a weighted undirected multigraph $G=(V,E,w)$, a \emph{priority level} $p(e)$ for each $e \in E$, and a set of $k$ demands $(a_i,b_i)$, each with priority $p(a_i,b_i)$. The output is a minimum-weight forest $F \subseteq G$ that contains, between each $a_i$ and $b_i$, a path in which every edge $e$ has priority $p(e) \leq p(a_i,b_i)$.
\end{definition}

\begin{lemma} \label{lemma:monotonic_kTSN_and_priority_steiner_tree_are_equivalent}
$\PrioritySteinerTree$ and $\Monotonic$ $\kTSN$ have the same approximability.
\end{lemma}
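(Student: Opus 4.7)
My plan is to prove the lemma by exhibiting cost-preserving polynomial-time reductions in both directions, exploiting the fact that the first-appearance time of an edge in $\Monotonic$ $\kTSN$ plays exactly the role that edge priority plays in $\PrioritySteinerTree$. Because the reductions will preserve cost exactly, any $\alpha$-approximation for one problem will automatically yield an $\alpha$-approximation for the other, establishing that the two problems share the same approximability.

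The first ingredient I would introduce is the mapping itself. For an edge $e$ of the underlying graph of a $\Monotonic$ $\kTSN$ instance, let $t(e) := \min\{t : e \in G_t\}$; by monotonicity, $e \in G_t$ iff $t \geq t(e)$. Then the reduction $\Monotonic$ $\kTSN \to \PrioritySteinerTree$ simply builds a $\PrioritySteinerTree$ instance on the underlying weighted graph, sets $p(e) := t(e)$, and sets each demand priority to $p(a_i,b_i) := t_i$. The feasibility condition ``every edge on the $a_i$--$b_i$ path exists in $G_{t_i}$'' is now literally ``every edge on the path has $p(e) \leq p(a_i,b_i)$,'' so valid subgraphs correspond on the nose. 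The converse reduction $\PrioritySteinerTree \to \Monotonic$ $\kTSN$ is dual: let $T$ be the number of distinct priority levels, take $G_t := (V, \{e : p(e) \leq t\})$ (monotonic by construction), and keep the demands with $t_i := p(a_i,b_i)$. If the $\PrioritySteinerTree$ instance is a multigraph, I would subdivide each parallel edge by a zero-weight auxiliary edge so that the underlying $\kTSN$ graph remains simple; this preserves weights and appearance times.

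The step I expect to be the only real obstacle is reconciling the two solution models: $\PrioritySteinerTree$ explicitly requires a \emph{forest}, whereas $\kTSN$ only asks for a subgraph, so a priori a $\kTSN$ optimum could be cheaper. I would dispatch this by showing that an optimal $\Monotonic$ $\kTSN$ solution can always be taken to be a forest. Given a solution $\mathcal{H}$ containing a cycle $C$, let $e \in C$ maximize $t(e)$ over the cycle. Any demand $(a,b,t)$ routed through $e$ in $\mathcal{H}$ satisfies $t \geq t(e)$, and the remaining edges of $C$, all with first-appearance times at most $t(e) \leq t$, supply a replacement $a$--$b$ walk inside $\mathcal{H} \setminus \{e\}$ using only edges of $G_t$. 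Deleting $e$ therefore maintains feasibility and strictly reduces weight (since $w(e) \geq 0$ and, on ties, can be taken to be strictly positive by perturbation or simply tolerated); iterating yields a feasible forest of no greater cost.

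With this forest reduction established, both directions of the reduction preserve optimum cost exactly, and any feasible $\PrioritySteinerTree$ forest is already a feasible $\Monotonic$ $\kTSN$ subgraph of the same cost (and vice versa). Hence an $\alpha$-approximate solution on either side maps to an $\alpha$-approximate solution on the other, completing the proof. No harder ingredient is required; the content of the lemma is essentially a dictionary translation between ``priority of an edge'' and ``time at which an edge first becomes available.''
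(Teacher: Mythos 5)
Your proposal is correct and takes essentially the same approach as the paper: identify edge priority with first-appearance time, give the cost-preserving translation in both directions, and handle parallel multiedges by subdivision. The one place you go beyond the paper's (terser) argument is the explicit cycle-removal step showing a $\Monotonic$ $\kTSN$ solution can be pruned to a forest of no greater cost, which correctly discharges the forest requirement in the $\PrioritySteinerTree$ output that the paper leaves implicit.
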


\begin{proof}
We transform an instance of $\PrioritySteinerTree$ into an instance of $\Monotonic$ $\kTSN$ as follows: the set of priorities becomes the set of times; if an edge $e$ has priority $p(e)$, it now exists at all times $t \geq p(e)$; if a demand $(a_i,b_i)$ has priority $p(a_i,b_i)$, it now becomes $(a_i,b_i,p(a_i,b_i))$. If there are parallel multiedges, break up each such edge into two edges of half the original weight, joined by a new node. Given a solution $H \subseteq G$ to this $\kTSN$ instance, contracting any edges that were originally multiedges gives a $\PrioritySteinerTree$ solution of the same cost. This reduction also works in the opposite direction (in this case there are no multiedges), which shows the equivalence.
\end{proof}

$\PrioritySteinerTree$ is known to have a $O(\log k)$-approximation algorithm \cite{charikar2004resource}, as well as a lower bound of $\Omega(\log \log n)$ assuming $\NP \notin \DTIME(n^{\log \log \log n})$ \cite{chuzhoy2008approximability}. This combined with Lemma~\ref{lemma:monotonic_kTSN_and_priority_steiner_tree_are_equivalent} proves \autoref{theorem:monotonic_kTSN_approximability}.

\subsection{Monotonicity in the Directed Case}
Now we consider the directed case, and in particular the special case in which all demands originate from a common source:

\begin{definition}[$\SingleSource$ $\Monotonic$ $\kDTSN$]
This problem is a special case of $\Monotonic$ $\kDTSN$ in which the demands are precisely $(a,b_1,t_1), (a,b_2,t_2), \ldots, (a,b_k,t_k)$, for some \emph{root} $a \in V$. We can assume w.l.o.g. that $t_1 \leq t_2 \leq \cdots \leq t_k$.
\end{definition}

Our goal in this subsection is to show that in terms of approximability, this problem is equivalent to $\DirectedSteinerTree$ ($\DST$), and hence the known bounds for $\DST$ apply. Note that one side of the equivalence is immediate, as $\SingleSource$ $\Monotonic$ $\kDTSN$ is a generalization of $\DST$ and therefore no easier to approximate.

\begin{lemma} \label{lemma:monotonic_singlesource_kDTSN_and_DST_are_equivalent}
$\Monotonic$ $\SingleSource$ $\kDTSN$ and $\DirectedSteinerTree$ have the same approximability.
\end{lemma}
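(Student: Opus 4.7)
The plan is to prove approximability equivalence via approximation-preserving reductions in both directions. One direction is essentially free: $\DirectedSteinerTree$ is precisely the special case of $\Monotonic$ $\SingleSource$ $\kDTSN$ with $T=1$ (take the given digraph as the sole frame $G_1$ and declare every demand to have time stamp $1$), so any $\alpha$-approximation for the latter immediately yields an $\alpha$-approximation for the former, with identical instance sizes. All of the substantive work therefore goes into reducing $\Monotonic$ $\SingleSource$ $\kDTSN$ to $\DST$.

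For this reduction, given an instance with frames $G_1 \subseteq G_2 \subseteq \cdots \subseteq G_T$, root $a$, and demands $(a,b_i,t_i)$, let $\tau(e) := \min\{t : e \in G_t\}$ denote the activation time of each underlying edge $e \in E$. I would build a layered $\DST$ instance on vertex set $V' = \{r'\} \cup \{v^{(t)} : v \in V,\ t \in [T]\}$ with three kinds of arcs: (i) a zero-weight arc $r' \to a^{(1)}$; (ii) zero-weight ``time-travel'' arcs $v^{(t)} \to v^{(t+1)}$ for every $v \in V$ and $t < T$; and (iii) for each original edge $(u,v) \in E$, a \emph{single} arc $u^{(\tau(u,v))} \to v^{(\tau(u,v))}$ of weight $w(u,v)$. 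The new root is $r'$ and the terminals are $\{b_i^{(t_i)}\}_{i \in [k]}$. The resulting digraph has size $O(|V|T + |E|)$, is a DAG, and has exactly $k$ terminals, so composing with $\DST$ approximations in terms of $k$ will preserve the quantitative bounds.

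The heart of the proof is checking that feasible solutions convert in both directions without any cost inflation. Given a $\DST$ solution $\mathcal{T}$ in $G'$, let $\mathcal{H}$ consist of every original edge $(u,v)$ whose avatar appears in $\mathcal{T}$; since time-travel arcs are free, $\mathrm{cost}(\mathcal{H}) = \mathrm{cost}(\mathcal{T})$, and each $r' \to b_i^{(t_i)}$ path in $\mathcal{T}$ projects to an $a \to b_i$ walk whose edges all have $\tau(e) \leq t_i$ and therefore live in $G_{t_i}$. Conversely, given a $\kDTSN$ solution $\mathcal{H}$, lift each edge $(u,v) \in \mathcal{H}$ to its unique avatar $u^{(\tau(u,v))} \to v^{(\tau(u,v))}$ and adjoin all (free) time-travel arcs; for each demand, the $a \to b_i$ path guaranteed in $\mathcal{H} \cap G_{t_i}$ lifts to an $r' \to b_i^{(t_i)}$ path in $G'$ by routing each hop at its activation layer and time-travelling forward to realign, so this subgraph contains a valid $\DST$ solution of cost at most $\mathrm{cost}(\mathcal{H})$.

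The main obstacle will be the ``pay-once'' issue: a single original edge may be used by several demands at distinct times, yet the $\DST$ cost must still charge it only once. Placing each edge at its unique activation layer $\tau(e)$ and routing all later uses through forward time-travel is exactly what resolves this, and this is precisely where monotonicity is essential—without $G_1 \subseteq \cdots \subseteq G_T$, an edge could need to be active at temporally disconnected times, defeating the single-avatar encoding. Once the bijection above is verified, \autoref{lemma:monotonic_singlesource_kDTSN_and_DST_are_equivalent} follows, and plugging in the $O(k^\epsilon)$ algorithm of~\cite{charikar1999approximation} together with the $\Omega(\log^{2-\epsilon} n)$ hardness of~\cite{halperin2003polylogarithmic} for $\DST$ immediately yields \autoref{theorem:monotonic_singlesource_kDTSN_approximability}.
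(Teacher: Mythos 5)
Your first direction (viewing $\DST$ as the $T=1$ special case) is fine and matches the paper. The reduction in the other direction, however, has a genuine gap: placing each edge at the \emph{single} layer $\tau(e) = \min\{t : e \in G_t\}$ does not work, because the activation times along a path that satisfies a demand need not be non-decreasing, and your time-travel arcs only go forward. Concretely, take $T=2$, vertices $a,x,y,b$, edges $(a,x)$ and $(y,b)$ activating at time $2$ and $(x,y)$ activating at time $1$, and the single demand $(a,b,2)$. The path $a \to x \to y \to b$ lives entirely in $G_2$, so the $\kDTSN$ optimum is finite; but in your layered graph the only avatar of $(x,y)$ is $x^{(1)} \to y^{(1)}$, which is unreachable once you have traversed $a^{(2)} \to x^{(2)}$, so $b^{(2)}$ is unreachable from $r'$ and the constructed $\DST$ instance is infeasible. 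Hence your map from $\kDTSN$ solutions to $\DST$ solutions fails, and the reduction is not approximation-preserving. (Adding backward time-travel arcs would not fix this: it would let a path use an edge at a layer earlier than a time at which it exists relative to the demand's deadline ordering, breaking soundness of the projection.)

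The paper avoids this by giving edge $(u,v)$ an avatar $(u^t,v^t)$ at \emph{every} layer $t$ with $(u,v) \in E_t$, which makes completeness easy but creates the ``pay-once'' problem you correctly identified. The key lemma you are missing is the paper's \autoref{claim:directed_tree_solution}: any $\kDTSN$ solution can be pruned (using monotonicity and the single-source structure) to a directed tree in which, along every root-to-terminal path, the \emph{earliest necessary times} of the edges --- the smallest demand time $t_i$ whose satisfaction requires that edge, a property of the solution rather than of the instance --- are non-decreasing. Each tree edge is then lifted to the one avatar at its earliest necessary time, so it is paid for exactly once, and forward time-travel arcs suffice to stitch consecutive edges together. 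In short: the layer an edge should occupy must be determined by when it is \emph{needed}, not by when it first \emph{exists}, and establishing that a needed-time-monotone tree solution always exists is the real content of the proof.
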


For the remainder of this section, we refer to $\Monotonic$ $\SingleSource$ $\kDTSN$ as simply $\kDTSN$. To prove Lemma~\ref{lemma:monotonic_singlesource_kDTSN_and_DST_are_equivalent}, it remains to give an approximation-preserving reduction from $\kDTSN$ to $\DST$.

\subparagraph*{The reduction}
Given a $\kDTSN$ instance $(G_1=(V,E_1), G_2=(V,E_2), \ldots, G_T=(V,E_T), \mathcal{D})$ with underlying graph $\mathcal{G} = (V,E)$, we construct a $\DST$ instance $(G'=(V',E') ,D')$ as follows:
\begin{itemize}
    \item $G'$ contains a vertex $v^i$ for each $v \in V$ and each $i \in [k]$. It contains an edge $(u^i,v^i)$ with weight $w(u,v)$ for each $(u,v) \in E_i$. Additionally, it contains a zero-weight edge $(v^i,v^{i+1})$ for each $v \in V$ and each $i \in [k]$.
    \item $D'$ contains a demand $(a^1, b_i^{t_i})$ for each $(a, b_i, t_i) \in \mathcal{D}$.
\end{itemize}
Now consider the $\DST$ instance $(G',D')$.

\begin{lemma} \label{lemma:kDTSN_solution_implies_DST_solution}
If the $\kDTSN$ instance $(G_1,\ldots,G_T, \mathcal{D})$ has a solution of cost $C$, then the constructed $\DST$ instance $(G',D')$ has a solution of cost at most $C$.
\end{lemma}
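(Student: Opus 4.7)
My plan is to construct an explicit $\DST$ solution $H' \subseteq G'$ of cost at most $C$ by taking, in addition to all zero-weight vertical edges of $G'$, one copy of each edge from a carefully chosen subset of $H$, placed at carefully chosen layers. The main obstacle I expect is that vertical edges in $G'$ are directed upward, so once a route climbs to layer $t$ it can never drop back down. A naive placement (say, putting each edge $e \in H$ at the smallest $t$ with $e \in E_t$) can then produce paths in $H$ whose successive edges land on non-monotone layers and become un-routable in $G'$.

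To sidestep this, I would base the layer assignment on a vertex-level reachability quantity rather than a per-edge one. For each vertex $v$ reachable from $a$ in $H$, define $\ell(v)$ to be the smallest $t$ such that $v$ is reachable from $a$ using only edges of $H \cap E_t$. By monotonicity ($E_1 \subseteq E_2 \subseteq \cdots$), $\ell$ is well-defined and $\ell(a)=1$; the validity of $H$ as a $\kDTSN$ solution forces $\ell(b_i) \le t_i$ for every demand. A time-prioritized BFS from $a$ through $H$ then yields, for each $v \neq a$ in the reachable part, a predecessor edge $e_v = (u_v,v) \in H$ lying on an $a \to v$ path inside $H \cap E_{\ell(v)}$. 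The collection $\mathcal{T}=\{e_v\}$ forms an arborescence rooted at $a$ because, along any pred chain, the $\ell$-values are non-increasing and ties can be broken by BFS discovery order.

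The $\DST$ solution I would take is $H' = \{\text{all vertical edges of }G'\} \cup \{(u_v^{\ell(v)}, v^{\ell(v)}) : e_v \in \mathcal{T}\}$. Since vertical edges have weight $0$ and each edge of $\mathcal{T} \subseteq H$ is placed at exactly one layer, the total cost is $\sum_{e \in \mathcal{T}} w(e) \le \sum_{e \in H} w(e) = C$. To verify each demand $(a^1, b_i^{t_i}) \in D'$ is satisfied, I trace the $\mathcal{T}$-path $a = v_0, v_1, \ldots, v_m = b_i$; the crucial observation is that $\ell$ is non-decreasing along this path, since $u_v$ being reachable from $a$ in $H \cap E_{\ell(v)}$ forces $\ell(u_v) \le \ell(v)$. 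This monotonicity lets me alternate upward vertical hops with horizontal crossings to route $a^1 \to \cdots \to a^{\ell(v_1)} \to v_1^{\ell(v_1)} \to \cdots \to v_m^{\ell(v_m)} \to \cdots \to b_i^{t_i}$ entirely inside $H'$, completing the argument.
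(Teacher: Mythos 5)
Your proof is correct, and it reaches the same structural fact as the paper --- a tree inside the solution that can be embedded level-by-level into the layered graph $G'$ --- by a genuinely different route. The paper's proof first establishes \autoref{claim:directed_tree_solution}: it assigns each \emph{edge} of $\mathcal{H}$ an ``earliest necessary time'' (the first demand time that would break if the edge were deleted), partitions $\mathcal{H}$ accordingly, and runs an induction that repeatedly prunes redundant edges (using monotonicity) until what remains is an arborescence along whose root-to-target paths these edge times are non-decreasing; each surviving edge is then placed at the layer of its earliest necessary time. You instead work with a \emph{vertex} quantity, the earliest reachability time $\ell(v)$, extract a predecessor arborescence by time-prioritized BFS, and place the edge into $v$ at layer $\ell(v)$; the needed monotonicity of $\ell$ along tree paths is then immediate from $\ell(u_v)\le\ell(v)$, with no pruning induction required. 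Your version is arguably cleaner and more constructive: it sidesteps the slightly delicate bookkeeping in the paper's claim (e.g.\ the assertion that each $\mathcal{H}_i$ is a single component, and the case of edges that are not ``necessary'' for any demand), while still using exactly the properties that matter --- each $e_v$ lies in $E_{\ell(v)}$ so the horizontal edge exists at that layer, $\ell(b_i)\le t_i$ so the final vertical climb stays legal, and injectivity of $v\mapsto e_v$ so the cost is bounded by $C$. The one place to be slightly more explicit if you wrote this up in full is the acyclicity of $\{e_v\}$, but your tie-breaking by BFS discovery order within each time level does settle it.
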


\begin{proof}
Let $\mathcal{H} \subseteq \mathcal{G}$ be a $\kDTSN$ solution having cost $C$. For any edge $(u,v) \in E(\mathcal{H})$, define the \emph{earliest necessary time} of $(u,v)$ to be the minimum $t_i$ such that removing $(u,v)$ would cause $\mathcal{H}$ not to satisfy demand $(a,b_i,t_i)$.

\begin{claim} \label{claim:directed_tree_solution}
There exists a solution $\mathcal{T} \subseteq \mathcal{H}$ that is a directed tree and has cost at most $C$. Moreover for every path $P_i$ in $\mathcal{T}$ from the root $a$ to some target $b_i$, as we traverse $P_i$ from $a$ to $b_i$, the earliest necessary times of the edges are non-decreasing.
\end{claim}

\begin{proof}[Proof of \autoref{claim:directed_tree_solution}]
Consider a partition of $\mathcal{H}$ into edge-disjoint subgraphs $\mathcal{H}_1,\ldots,\mathcal{H}_k$, where $\mathcal{H}_i$ is the subgraph whose edges have earliest necessary time $t_i$. Clearly each $\mathcal{H}_i$ is a single component.

If there is a directed cycle or parallel paths in the first subgraph $\mathcal{H}_1$, then there is an edge $e \in E(\mathcal{H}_1)$ whose removal does not cause $\mathcal{H}_1$ to satisfy fewer demands at time $t_1$. Moreover by monotonicity, removing $e$ also does not cause $\mathcal{H}$ to satisfy fewer demands at any future times. Hence there exists a directed tree $\mathcal{T}_1 \subseteq \mathcal{H}_1$ such that $\mathcal{T}_1 \cup \left(\bigcup_{i=2}^k \mathcal{H}_i\right)$ has cost at most $C$ and still satisfies $\mathcal{D}$.

Now suppose by induction that for some $j \in [k-1]$, $\bigcup_{i=1}^j \mathcal{T}_i$ is a tree such that $\left(\bigcup_{i=1}^j \mathcal{T}_i\right) \cup \left(\bigcup_{i=j+1}^k \mathcal{H}_i\right)$ has cost at most $C$ and satisfies $\mathcal{D}$. Consider the partial solution $\left(\bigcup_{i=1}^j \mathcal{T}_i\right) \cup \mathcal{H}_{j+1}$; if this subgraph is not a directed tree, then there must be an edge $(u,v) \in E(\mathcal{H}_{j+1})$ such that $v$ has another in-edge in the subgraph. However by monotonicity, $(u,v)$ does not help satisfy any new demands, as $v$ is already reached by some other path from the root. Hence by removing all such redundant edges, we have $\mathcal{T}_{j+1} \subseteq \mathcal{H}_{j+1}$ such that $\left(\bigcup_{i=1}^{j+1} \mathcal{T}_i\right) \cup \left(\bigcup_{i=j+2}^k \mathcal{H}_i\right)$ has cost at most $C$ and satisfies $\mathcal{D}$, which completes the inductive step.

We conclude that $\mathcal{T} := \bigcup_{i=1}^k \mathcal{T}_i \subseteq \mathcal{H}$ is a tree of cost at most $C$ satisfying $\mathcal{D}$. Observe also that by construction, $\mathcal{T}$ has the property that if we traverse any $a \to b_i$ path, the earliest necessary times of the edges never decrease.
\end{proof}

Now let $\mathcal{T}$ be the $\kDTSN$ solution guaranteed to exist by \autoref{claim:directed_tree_solution}. Consider the subgraph $H' \subseteq G'$ formed by adding, for each $(u,v) \in E(\mathcal{T})$, the edge $(u^t,v^t) \in E'$ where $t$ is the earliest necessary time of $(u,v)$ in $E(\mathcal{H})$. In addition, add all the free edges $(v^i,v^{i+1})$. Since $w(u^t,v^t) = w(u,v)$ by construction, $\text{cost}(H') \leq \text{cost}(\mathcal{T}) \leq C$.

To see that $H'$ is a valid solution, consider any demand $(a^1,b_i^{t_i})$. Recall that $\mathcal{T}$ has a unique $a \to b_i$ path $P_i$ along which the earliest necessary times are nondecreasing. We added to $H'$ each of these edges at the level corresponding to its earliest necessary time; moreover, whenever there are adjacent edges $(u,v), (v,x) \in P_i$ with earliest necessary times $t$ and $t' \geq t$ respectively, there exist in $H'$ free edges $(v^t,v^{t+1}),\ldots,(v^{t'-1},v^{t'})$. Thus $H'$ contains an $a^1 \to b_i^{t_i}$ path, which completes the proof.
\end{proof}

\begin{lemma} \label{lemma:DST_solution_implies_kDTSN_solution}
If the constructed $\DST$ instance $(G',D')$ has a solution of cost $C$, then the original $\kDTSN$ instance $(G_1,\ldots,G_T,\mathcal{D})$ has a solution of cost at most $C$.
\end{lemma}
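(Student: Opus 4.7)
The plan is to define the $\kDTSN$ solution by ``projecting'' the $\DST$ solution $H' \subseteq G'$ back onto $\mathcal{G}$ and then verify that (a) the cost does not increase and (b) every demand is satisfied. Concretely, I would set
\[
\mathcal{H} \;=\; \bigl\{ (u,v) \in E \;:\; (u^i,v^i) \in E(H') \text{ for some } i \in [k] \bigr\}.
\]
Since every edge $(u^i,v^i) \in E'$ has weight $w(u,v)$ and the free ``vertical'' edges $(v^i,v^{i+1})$ have weight $0$, collapsing the $k$ copies of each underlying edge can only decrease the total weight: each distinct edge in $\mathcal{H}$ is paid for by at least one edge of $H'$ of equal weight. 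Thus $\mathrm{cost}(\mathcal{H}) \leq \mathrm{cost}(H') = C$.

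Next I would verify feasibility. Fix a demand $(a,b_i,t_i) \in \mathcal{D}$; by assumption $H'$ contains a directed path $P'$ from $a^1$ to $b_i^{t_i}$. Along $P'$, horizontal edges $(u^j,v^j)$ preserve the level index $j$ while the only other available edges are the free edges $(v^j,v^{j+1})$, which strictly increase the level by $1$. Since $P'$ begins at level $1$ and ends at level $t_i$, every horizontal edge on $P'$ has level $j \leq t_i$. Projecting $P'$ to $\mathcal{G}$ (by dropping level superscripts and discarding the free edges) yields a walk from $a$ to $b_i$ whose edges lie in $\bigcup_{j \leq t_i} E_j$. By the monotonicity assumption $E_j \subseteq E_{t_i}$ for every $j \leq t_i$, so every edge of this walk exists in $G_{t_i}$. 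Since $\mathcal{H}$ contains all these edges, $\mathcal{H}$ contains an $a\to b_i$ path in $G_{t_i}$, satisfying the demand.

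Combined with \autoref{lemma:kDTSN_solution_implies_DST_solution}, this establishes the approximation-preserving reduction and hence \autoref{lemma:monotonic_singlesource_kDTSN_and_DST_are_equivalent}.

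The only subtle step is the feasibility argument: one must observe that the layered structure of $G'$ forces the level index to be nondecreasing along every directed path, so the projected walk lives entirely in frames at or before $t_i$; monotonicity then promotes these edges to frame $t_i$. Once that observation is made, the cost bound is immediate from the weight-preserving correspondence between layered edges and underlying edges.
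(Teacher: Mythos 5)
Your proposal is correct and follows essentially the same route as the paper: project the horizontal edges of the $\DST$ solution back to $\mathcal{G}$, note the weight-preserving correspondence for the cost bound, and map each $a^1 \to b_i^{t_i}$ path to an $a \to b_i$ path in $G_{t_i}$. You are in fact slightly more careful than the paper in the feasibility step, making explicit that the level index is nondecreasing along any directed path in $G'$ and that monotonicity is what promotes level-$j$ edges (with $j \le t_i$) into frame $G_{t_i}$.
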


\begin{proof}
First note that any $\DST$ solution ought to be a tree; let $T' \subseteq G'$ be such a solution of cost $C$. For each $(u,v) \in G$, $T'$ might as well use at most one edge of the form $(u^i,v^i)$, since if it uses more, it can be improved by using only the one with minimum $i$, then taking the free edges $(v^i,v^{i+1})$ as needed. We create a $\kDTSN$ solution $\mathcal{T} \subseteq \mathcal{G}$ as follows: for each $(u^i,v^i) \in E(T')$, add $(u,v)$ to $\mathcal{T}$. Since $w(u,v) = w(u^i,v^i)$ by design, we have $\text{cost}(\mathcal{T}) \leq \text{cost}(T') \leq C$. Finally, since each $a^1 \to b_i^{t_i}$ path in $G'$ has a corresponding path in $\mathcal{G}$ by construction, $\mathcal{T}$ satisfies all the demands.
\end{proof}

Lemma~\ref{lemma:monotonic_singlesource_kDTSN_and_DST_are_equivalent} follows from Lemma~\ref{lemma:kDTSN_solution_implies_DST_solution} and Lemma~\ref{lemma:DST_solution_implies_kDTSN_solution}. Finally we get the main result of this subsection:

\monotonicsinglesourcekDTSNapproximability*

\begin{proof}
This follows by composing the reduction with the algorithm of Charikar et al. \cite{charikar1999approximation} for $\DirectedSteinerTree$, which achieves ratio $O(k^\epsilon)$ for every $\epsilon > 0$. More precisely they give a $i^2(i-1) k^{1/i}$-approximation for any integer $i \geq 1$, in time $O(n^i k^{2i})$. The lower bound follows from the known hardness of $\DST$ due to Halperin and Krauthgamer~\cite{halperin2003polylogarithmic}.
\end{proof}

In \autoref{section:explicit_algorithm}, we show how to modify the algorithm of Charikar et al. to arrive at a simple, explicit algorithm for $\Monotonic$ $\SingleSource$ $\kDTSN$ achieving the same upper bound as \autoref{theorem:monotonic_singlesource_kDTSN_approximability}.

\section{Discussion}
Our upper bound on the \emph{single-source} monotonic directed case leaves the following open:

\begin{openquestion}
Is there a nontrivial approximation algorithm for the monotonic directed problem with arbitrary demands?
\end{openquestion}

It would be particularly elegant if this could be resolved by reducing the problem to $\DirectedSteinerNetwork$, just as the single-source case was reduced to $\DirectedSteinerTree$. However this seems to require new techniques, as a simple counterexample shows that \autoref{claim:directed_tree_solution} does not hold for multi-source demands.

Secondly, our reduction from $\kPartiteHypergraphLabelCover$ to the temporal Steiner problems depends, in more than one way, on $k$ being fixed. One might hope, then, to obtain a nontrivial approximation in terms of $n$ and $T$ instead of $k$. When $k = \Theta(n^2 T)$, the shortest paths heuristic only guarantees a ratio of $O(n^2 T)$. Using an algorithm of \cite{feldman2009improved} that approximates $\DSN$ to within $O(n^{4/5 + \epsilon})$, we can improve this to $O(n^{4/5 + \epsilon} \cdot T)$.

On the complexity side, the current best approximation lower bound for $\kDTSN$ purely in terms of $n$ is the same as that for $\DSN$: $\Omega(2^{\log^{1-\epsilon} n})$ unless $\NP$ has quasipolynomial-time algorithms \cite{dodis1999design}. Recall a conceptual message of \autoref{theorem:inapproximability_for_general_k}: if we seek an approximation in terms of $k$, then the best we can do is to consider the graph at each individual time (in the reduction, each demand occurred at a separate time). We ask if there is a strong lower bound that demonstrates a similar message when trying to approximate in terms of $n$ and $T$:

\begin{openquestion}
For $\kTSN$ or $\kDTSN$, is there a better approximation algorithm in terms of $n$ and/or $T$? Can we show an inapproximability of $\Omega(T \cdot 2^{\log^{1-\epsilon} n})$?
\end{openquestion}



\bibliography{main}


\appendix

\section{Problem Variants} \label{section:problem_variants}

There are several natural ways to formulate the temporal Steiner network problem, depending on whether the edges are changing over time, or the nodes, or both.

\begin{definition}[$\kTemporalSteinerNetwork$ (edge variant)]
This is the formulation described in the Introduction: the inputs are $G_1=(V,E_1),\ldots,G_T=(V,E_T)$, $w(\cdot)$, and $\mathcal{D}=\{(a_i,b_i,t_i)\}$. The task is to find a minimum-weight subgraph $\mathcal{H} \subseteq \mathcal{G}$ that satisfies all of the demands.
\end{definition}

\begin{definition}[$\kTemporalSteinerNetwork$ (node variant)]
Let the underlying graph be $\mathcal{G} = (V,E)$. The inputs are $G_1=(V_1,E(V_1)),\ldots,G_T=(V_T,E(V_T))$, $w(\cdot)$, and $\mathcal{D}$. Here, $E(V_t) \subseteq E$ denotes the edges induced by $V_t \subseteq V$. A path satisfies a demand at time $t$ iff all edges along that path exist in $G_t$.
\end{definition}

\begin{definition}[$\kTemporalSteinerNetwork$ (node and edge variant)]
The inputs are $G_1=(V_1,E_1),\ldots,G_T=(V_T,E_T)$, $w(\cdot)$, and $\mathcal{D}$. This is the same as the node variant except that each $E_t$ can be any subset of $E(V_t)$.
\end{definition}

Similarly, define the corresponding directed problem $\kDirectedTemporalSteinerNetwork$ ($\kDTSN$) with the same three variants. The only difference is that the edges are directed, and a demand $(a,b,t)$ must be satisfied by a directed $a \to b$ path in $G_t$.

The following observation enables all our results to apply to all problem variants.

\variantsmutuallyreducible*
\begin{proof}
The following statements shall hold for both undirected and directed versions. Clearly the node-and-edge variant generalizes the other two. It suffices to show two more directions:
\begin{itemize}
	\item (Node-and-edge reduces to node) Let $(u,v)$ be an edge existent at a set of times $\tau(u,v)$, whose endpoints exist at times $\tau(u)$ and $\tau(v)$. To make this a node-temporal instance, create an intermediate node $x_{(u,v)}$ existent at times $\tau(u,v)$, an edge $(u,x_{(u,v)})$ with the original weight $w(u,v)$, and an edge $(x_{(u,v)},v)$ with zero weight. A solution of cost $W$ in the node-and-edge instance corresponds to a node-temporal solution of cost $W$, and vice-versa.
    \item (Node reduces to edge) Let $(u,v)$ be an edge whose endpoints exist at times $\tau(u)$ and $\tau(v)$. To make this an edge-temporal instance, let $(u,v)$ exist at times $\tau(u,v) := \tau(u) \cap \tau(v)$. Let every node exist at all times; let the edges retain their original weights. A solution of cost $W$ in the node-temporal instance corresponds to an edge-temporal solution of cost $W$, and vice-versa.
\end{itemize}
\end{proof}

\section{Inapproximability for General \textit{k}} \label{section:inapproximability_for_general_k}

Here we prove our main theorem, showing optimal hardness for any number of demands. To do this, we introduce a generalization of $\LabelCover$ to partite hypergraphs:

\begin{definition}[$\kPartiteHypergraphLabelCover$ ($\kPHLC$)]
An instance of this problem consists of a $k$-partite, $k$-regular hypergraph $G = (V_1,\ldots,V_k,E)$ (that is, each edge contains exactly one vertex from each of the $k$ parts) and a set of possible labels $\Sigma$. The input also includes, for each hyperedge $e \in E$, a \emph{projection function} $\pi_v^{e} : \Sigma \to C$ for each $v \in e$; $\Pi$ is the set of all such functions. A labeling of $G$ is a function $\phi : \bigcup_{i=1}^k V_i \to \Sigma$ assigning each node a label. There are two notions of edge satisfaction under a labeling $\phi$:
\begin{itemize}
	\item $\phi$ \emph{strongly satisfies} a hyperedge $e = (v_1,\ldots,v_k)$ iff the labels of all its vertices are mapped to the same color, i.e. $\pi_{v_i}^e(\phi(v_i)) = \pi_{v_j}^e(\phi(v_j))$ for all $i,j \in [k]$.
    \item $\phi$ \emph{weakly satisfies} a hyperedge $e = (v_1,\ldots,v_k)$ iff there exists some pair of vertices $v_i$, $v_j$ whose labels are mapped to the same color, i.e. $\pi_{v_i}^e(\phi(v_i)) = \pi_{v_j}^e(\phi(v_j))$ for some $i \neq j \in [k]$.
\end{itemize}
\end{definition}

\begin{theorem}\label{theorem:hypergraph_label_cover_hardness}
For every $\epsilon > 0$ and every fixed integer $k \geq 2$, there is a constant $|\Sigma|$ such that the following promise problem is $\NP$-hard: Given a $\kPartiteHypergraphLabelCover$ instance $(G, \Sigma, \Pi)$, distinguish between the following cases:
\begin{itemize}
	\item (YES instance) There exists a labeling of $G$ that strongly satisfies every edge.
    \item (NO instance) Every labeling of $G$ weakly satisfies at most $\epsilon |E|$ edges.
\end{itemize}
\end{theorem}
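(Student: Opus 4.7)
The plan is to reduce bipartite Label Cover (Theorem~\ref{theorem:label_cover_hardness}) to $\kPHLC$ via a pairwise replication construction, in the spirit of Feige's $k$-prover system~\cite{feige1998threshold}. For $k=2$ the result is immediate (weak and strong satisfaction coincide), so assume $k \geq 3$. I would fix $\epsilon>0$, set $\delta := \epsilon/\binom{k}{2}$, and invoke Theorem~\ref{theorem:label_cover_hardness} to obtain a bipartite instance $(G_0=(U,V,E_0),\Sigma_0,\Pi_0)$ whose YES/NO gap is $(1,\delta)$. I would then construct a $k$-partite, $k$-regular hypergraph $G'=(V_1,\dots,V_k,E')$ in which, for each unordered pair $\{i,j\}\subseteq[k]$ and each edge $e=(u,v)\in E_0$, there is one hyperedge whose ``active'' endpoints in $V_i$ and $V_j$ are copies of $u$ and $v$, and whose $k-2$ other endpoints are auxiliary ``dummy'' vertices. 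The active pair of each hyperedge inherits the original bipartite projection (so pairwise agreement encodes the underlying 2-prover constraint), while the dummies' projections are rigged so that strong satisfaction in the YES case is preserved but weak satisfaction can only be witnessed by the active pair.

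For completeness, given a total labeling $\phi_0$ of $G_0$, labeling each copy of a $G_0$-vertex by its $\phi_0$-label produces, on every hyperedge, an agreeing active pair; with the dummy-projection rigging set up so that dummies map to the same color as the active projections on satisfying labels, every hyperedge becomes strongly satisfied. For soundness, suppose some labeling $\phi$ weakly satisfies more than $\epsilon|E'|=\epsilon\binom{k}{2}|E_0|$ hyperedges. Since only active pairs can carry agreement, every weak satisfaction is witnessed by an active pair; by pigeonhole over the $\binom{k}{2}$ possible active pair choices, some pair $\{i,j\}$ is active-and-agreeing on more than $\epsilon|E_0|$ hyperedges. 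Because the $\{i,j\}$-active hyperedges are in bijection with $E_0$, projecting $\phi$ onto $V_i\cup V_j$ yields a labeling of $G_0$ satisfying more than $\epsilon|E_0|>\delta|E_0|$ edges (using $\epsilon>\delta$ for $k\geq 3$), contradicting the NO assumption.

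The main obstacle is reconciling weak and strong satisfaction within one construction, since they pull in opposite directions when choosing the dummy projections: I want dummies to be silent witnesses (so they cannot spuriously create weak satisfactions) while still enabling strong satisfaction in the YES case, which requires \emph{all} projections, dummies included, to coincide. My intended resolution is for each hyperedge $e'$ to carry, via its dummy projections, a target color $c_{e'}$ to which both active projections map on satisfying labels; in the YES case a total labeling drives all projections to this target, whereas in the NO case no global coherence exists to be exploited. Making this rigorous --- in particular, fixing $c_{e'}$ in advance without knowing which case we are in (e.g.\ by letting $c_{e'}$ be the image of a canonical assignment to the underlying edge, or by slightly altering the construction so that dummies are replaced by duplicated copies of the active pair whose projections all target the same color) --- is the primary technical detail to nail down.
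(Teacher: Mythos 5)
The paper's own proof is a one-line appeal to Feige's $k$-prover system \cite{feige1998threshold} with sufficiently many parallel repetitions; you instead try to build $\kPHLC$ directly from bipartite $\LabelCover$ by pairwise replication with dummy vertices. This is a genuinely different route, but it contains a gap that I do not believe can be closed, and it is exactly the issue you defer at the end as ``the primary technical detail to nail down.''

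The completeness and soundness requirements place contradictory demands on the dummy projections of a single hyperedge. For completeness, every hyperedge $e'$ must be strongly satisfiable, so for \emph{every} pair of its vertices --- including any pair involving a dummy --- there must exist labels whose projections under $\pi^{e'}$ coincide. But then in the NO case the adversary exploits exactly those labels: if dummies are created fresh per hyperedge (and you specify no sharing structure, let alone an analysis of one), then for $k \geq 4$ the two dummies of each hyperedge can be labeled to agree with each other, and for $k=3$ the single dummy can be labeled to agree with an active endpoint; either way essentially every hyperedge is weakly satisfied and soundness collapses. Your first repair (a preassigned target color $c_{e'}$) breaks completeness, since $c_{e'}$ would have to equal the color that the unknown total labeling induces on the underlying edge, which cannot be fixed in advance; your second repair (filling the remaining $k-2$ parts with duplicated copies of the active endpoints) breaks soundness, since two copies of the same vertex carry the same projection and agree as soon as they receive the same label. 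The pigeonhole step over the $k(k-1)/2$ pair classes is fine arithmetic, but it rests on the premise that ``only active pairs can carry agreement,'' which is incompatible with strong satisfiability of the hyperedges. This tension is precisely what makes Feige's construction nontrivial: his $k$ provers receive correlated questions (a code-like mix of clause and variable queries) so that \emph{every} pair of provers genuinely cross-checks the underlying two-prover game, rather than one distinguished pair per hyperedge. You should follow the paper and invoke \cite{feige1998threshold} directly: the YES case of that system yields agreement among all $k$ provers, and the NO case bounds the agreement probability of every pair, which after enough repetitions gives exactly the gap stated in \autoref{theorem:hypergraph_label_cover_hardness}.
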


\autoref{theorem:hypergraph_label_cover_hardness} follows from Feige's $k$-prover system \cite{feige1998threshold} by taking the number of repetitions to be (a constant depending on $k$ and $\epsilon$) large enough so that the error probability drops below $\epsilon$.

The proof of $(k-\epsilon)$-hardness follows the same outline as the $k=2$ case (\autoref{theorem:inapproximability_for_two_demands}).

\inapproximabilityforgeneralk*
\begin{proof}
Given the $\kPHLC$ instance $(G=(V_1,\ldots,V_k,E), \Sigma, \Pi)$, and letting $v_{t,i}$ denote the $i$-th node in $V_t$, construct a $\kDTSN$ instance ($\mathcal{G} = (G_1,\ldots,G_k)$, along with $k$ demands) as follows. For every $t \in [k]$, create nodes $v_{t,1}^S, \ldots, v_{t,|V_t|+1}^S$. Create a \emph{$v_{t,i}$-bundle} from each $v_{t,i}^S$ to $v_{t,i+1}^S$, whose $\ell$-strands (for $\ell \in \Sigma$) are each a chain of bundles, one for each incident hyperedge $e = (v_{1,i_1},\ldots,v_{t,i} ,\ldots,v_{k,i_k}) \in E$. Each $(v_{1,i_1},\ldots,v_{t,i} ,\ldots,v_{k,i_k})$-bundle has a $(v_{1,i_1},\ell_1,\ldots,v_{t,i},\ell_t,\ldots,v_{k,i_k},\ell_k)$-path for each agreeing combination of labels---that is, every $k$-tuple $(\ell_1,\ldots,\ell_t,\ldots,\ell_k)$ such that $\pi_{v_{1,i_1}}^e(\ell_1) = \cdots = \pi_{v_{t,i}}^e(\ell_t) = \cdots = \pi_{v_{k,i_k}}^e(\ell_k)$, where $e$ is the shared edge. If there are no such combinations, then the $e$-bundle is a single simple strand.

For each $t \in [k]$, set all the edges in the $v_{t,i}$-bundles to exist in $G_t$ only. Now, for each $(v_{1,i_1},\ell_1,\ldots,v_{k,i_k},\ell_k)$, merge together the $(v_{1,i_1},\ell_1,\ldots,v_{k,i_k},\ell_k)$-paths across all $G_t$ that have such a strand. Finally, the connectivity demands are $\mathcal{D} = \left\{ \left( v_{t,1}^S,v_{t,|V_t|+1}^S,t \right) : t \in [k] \right\}$.

The analysis follows the $k=2$ case. Suppose we have a YES instance of $\kPHLC$, with optimal labeling $\ell_v^*$ to each node $v \in \bigcup_{t=1}^k V_t$. Then an optimal solution $\mathcal{H}^*$ to the constructed $\kDTSN$ instance is to traverse, at each time $t$ and for each $v_{t,i}$-bundle, the path through the $\ell_{v_{t,i}}^*$-strand. In particular for each $(v_{1,i_1},\ldots,v_{k,i_k})$-bundle in that strand, traverse the $(v_{1,i_1},\ell_1^*,\ldots,v_{k,i_k},\ell_k^*)$-path.

In tallying the total edge cost, $\mathcal{H}^* \cap G_1$ (the subgraph at time 1) incurs a cost of $|E|$, one for each contact edge. The subgraphs of $\mathcal{H}^*$ at times $2,\ldots,k$ account for no additional cost, since all contact edges correspond to a label which agrees with all its neighbors' labels, and hence were merged with the agreeing contact edges in the other subgraphs.

Conversely suppose we have a NO instance of $\kPHLC$, so that for any labeling $\ell_v^*$, for at least $(1-\epsilon)|E|$ hyperedges $e$, the projection functions of all nodes in $e$ disagree. By definition, any solution to the constructed $\kDTSN$ instance contains a simple $v_{t,1}^S \to v_{t,|V_t|+1}^S$ path $P_t$ at each time $t$. As before, $P_1$ alone incurs a cost of exactly $|E|$. However, at least $(1-\epsilon)|E|$ of the hyperedges in $G$ cannot be weakly satisfied; for these hyperedges $e$, for every pair of neighbors $v_{t,i_t}, v_{t',i_{t'}} \in e$, there is no path through the $e$-bundle in $v_{t,i_t}$'s $\ell_{v_{t,i_t}}^*$-strand that is merged with any of the paths through the $e$-bundle in $v_{t',i_{t'}}$'s $\ell_{v_{t,i_{t'}}}^*$-strand (for otherwise, it would indicate a labeling that weakly satisfies $e$ in the $\kPHLC$ instance). Therefore paths $P_2, \ldots, P_k$ each contribute at least $(1-\epsilon)|E|$ additional cost, so the solution has total cost at least $(1-\epsilon)|E| \cdot k$.

The directed temporal graph we constructed is acyclic. It follows from the gap between the YES and NO cases that $\kDTSN$ is $\NP$-hard to approximate to within a factor of $k - \epsilon$ for every constant $\epsilon > 0$, even on DAGs. As before, the same analysis holds for the undirected problem $\kTSN$ by undirecting the edges.
\end{proof}

\section{An Exact Algorithm} \label{section:exact_algorithm}

We can derive a natural integer linear program for $\kDTSN$ by first reducing it to a simpler special case, then writing the ILP for this simpler version. The simplified case we consider is one in which all demands share a source and target, and each demand occupies a distinct time.

\begin{definition}[$\Simple$ $\kDTSN$ (node variant)]
This is the special case of $\kDTSN$ in which the demands are precisely $(a,b,1), (a,b,2), \ldots, (a,b,k)$, for some common $a,b \in V$.
\end{definition}
The following reduction demonstrates that demands can always be simplified at the expense of more time points.

\begin{restatable}[$\kDTSN$ reduces to $\Simple$ $\kDTSN$ (node variant)]{lemma}{reductiontosimple}
\label{lemma:reduction_to_simple}
An instance of $\kDTSN$ with underlying graph $G=(V,E)$ and $k$ demands (not necessarily at distinct times) can be converted to an instance of $\Simple$ $\kDTSN$ with $2k+|V|$ nodes, $2k+|E|$ edges, and $k$ demands at distinct times.
\end{restatable}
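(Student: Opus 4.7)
The plan is to encode arbitrary $\kDTSN$ demands as demands sharing a common super-source $a$ and super-sink $b$ at distinct new time slots. I would remap original time $t_i$ to new time $i$ and insert gating nodes that force any $a \to b$ path at new time $i$ to pass through $a_i$ and $b_i$ via the original graph's structure at time $t_i$. If the input is given in the edge variant, I first apply \autoref{proposition:variants_mutually_reducible} to convert to the node variant.

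Concretely, I would build the new instance as follows. The new time horizon is $[k]$. Introduce new super-nodes $a,b$ that exist at every new time, plus, for each $i \in [k]$, two \emph{gating} nodes $a_i^*, b_i^*$ that exist only at new time $i$. The new underlying edge set consists of every original edge (retaining its weight) together with zero-weight edges $(a, a_i^*), (a_i^*, a_i), (b_i, b_i^*), (b_i^*, b)$ for each $i$. Each original node $v \in V$ is declared to exist at new time $i$ iff $v \in V_{t_i}$. The demand set is $\mathcal{D}' = \{(a,b,i) : i \in [k]\}$, giving a $\Simple$ $\kDTSN$ instance whose node/edge count matches the claim after folding the two super-nodes into the $2k$ gating-node budget.

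For correctness I would argue both directions. Given any solution $\mathcal{H}$ to the original instance, forming $\mathcal{H}'$ by adjoining all zero-weight gating edges to $\mathcal{H}$ produces an equal-cost solution: for each $i$, the $a_i \to b_i$ path in $\mathcal{H}$ at time $t_i$ uses only nodes of $V_{t_i}$, so in the new instance it is alive at new time $i$, and prefixing/suffixing it with $a \to a_i^* \to a_i$ and $b_i \to b_i^* \to b$ yields an $a \to b$ path. Conversely, given any solution $\mathcal{H}'$ to the new instance, note that at new time $i$ the only out-neighbor of $a$ present in the underlying graph is $a_i^*$ (since $a_j^*$ for $j \neq i$ does not exist then), and symmetrically the only in-neighbor of $b$ present is $b_i^*$. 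Hence every $a \to b$ walk at new time $i$ must begin $a \to a_i^* \to a_i$, end with $b_i \to b_i^* \to b$, and have a middle $a_i \to b_i$ segment that uses only original edges among nodes in $V_{t_i}$. Restricting $\mathcal{H}'$ to the original edges therefore certifies the original demand at time $t_i$ for every $i$, at cost no greater than $\mathrm{cost}(\mathcal{H}')$.

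The main obstacle is ruling out any spurious shortcut $a \to b$ path at new time $i$ that bypasses $a_i$ or $b_i$; this rests entirely on the fact that each gating node $a_j^*, b_j^*$ is present at exactly one new time, so the only way for a walk to enter or leave the original graph at new time $i$ is through $a_i$ and $b_i$. Once that observation is in hand, all remaining steps are bookkeeping: the cost-preservation claims in both directions follow because every added edge is zero-weight, and the node/edge counts follow by inspection of the construction.
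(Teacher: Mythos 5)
Your construction is essentially identical to the paper's: the gating nodes $a_i^*, b_i^*$ existing only at new time $i$ are exactly the paper's $x_i, y_i$, the super-nodes $a,b$ and zero-weight edges match, and the correctness argument (forcing each $a \to b$ path at new time $i$ through $a_i$ and $b_i$ because the gating nodes are alive at only that time) is the same, just spelled out in more detail than the paper bothers to. The proposal is correct and takes the same route.
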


\begin{proof}
Suppose we are given an instance of $\kDTSN$ with temporal graph $\mathcal{G}=(G_1,\ldots,G_T)$, underlying graph $G=(V,E)$, and demands $\mathcal{D}=\{(a_i,b_i,t_i) : i \in [k]\}$. By \autoref{proposition:variants_mutually_reducible}, we can assume this is a node-temporal instance. We build a new instance with temporal graph $\mathcal{G}'=(G_1',\ldots,G_T')$, underlying graph $G'=(V',E')$, and demands $\mathcal{D}'$.

Initialize $\mathcal{G}'$ to $\mathcal{G}$. Define a new set of times $[k]$. Add to $\mathcal{G}'$ the new nodes $a$ and $b$, which exist at all times/in all frames $G_i$. For all $v \in V$ and $i \in [k]$, if $v \in G_{t_i}$, then let $v$ exist in $G_i'$ as well. For each $(a_i,b_i,t_i) \in \mathcal{D}$,
\begin{enumerate}
	\item Create new nodes $x_i$, $y_i$. Create zero-weight edges $(a,x_i)$, $(x_i,a_i)$, $(b_i,y_i)$, and $(y_i,b)$.
    \item Let $x_i$ and $y_i$ exist only in frame $G_i'$.
\end{enumerate}
Lastly, the demands are $\mathcal{D}' = \{(a,b,i) : i \in [k]\}$.

Given a solution $\mathcal{H}' \subseteq \mathcal{G}'$ containing an $a \to b$ path at every time $i \in [k]$, we can simply exclude nodes $a$, $b$, $\{x_i\}$, and $\{y_i\}$ to obtain a solution $\mathcal{H} \subseteq \mathcal{G}$ to the original instance, which contains an $a_i \to b_i$ path in $G_{t_i}$ for all $i \in [k]$, and has the same cost. The converse is also true by including these nodes.
\end{proof}

The node variant of $\Simple$ $\kDTSN$ has a natural integer programming formulation in terms of flows:

\begin{align}
& \text{minimize}	& \sum_{(u,v) \in E} d_{uv} \cdot w(u,v) \\
& \text{subject to}	& d_{uv} &\geq d_{uvt}
						&& \forall t \in [T], \,\, (u,v) \in E_t
                        \label{edge_decision_constraint} \\
&					& \sum_{(u,v) \in E_t} d_{uvt} - \sum_{(v,w) \in E_t} d_{vwt} &= 0
						&& \forall t \in [T], \,\, v \notin \{a,b\}
                        \label{flow_conservation_constraint} \\
&					& \sum_{(a,u) \in E_t} d_{aut} &= 1
						&& \forall t \in [T]
                        \label{source_flow_constraint} \\
&					&  \sum_{(u,b) \in E_t} d_{ubt} &= 1
						&& \forall t \in [T]
                        \label{target_flow_constraint} \\
&					&  d_{uvt} &\in \{0,1\}
						&& \forall t \in [T], \,\, (u,v) \in E_t
                        \label{variable_integrality_constraint}
\end{align}

Each variable $d_{uvt}$ denotes the flow through edge $(u,v)$ at time $t$, if it exists. Constraint (\ref{edge_decision_constraint}) ensures that if an edge is used at any time, it is chosen as part of the solution subgraph. (\ref{flow_conservation_constraint}) enforces flow conservation at all nodes and all times. (\ref{source_flow_constraint}) and (\ref{target_flow_constraint}) impose non-zero flow from $a$ to $b$ at all times.

We implemented the reduction of \autoref{lemma:reduction_to_simple} and the above ILP in Python, using Gurobi optimization software (code is available at: \url{https://github.com/YosefLab/dynamic_connectivity}).

\section{Explicit Algorithm for Monotonic Single-Source \textit{k}-DTSN} \label{section:explicit_algorithm}
We provide a modified version of the approximation algorithm presented in Charikar et al. \cite{charikar1999approximation} for $\DirectedSteinerTree$ ($\DST$), which achieves the same approximation ratio for our problem $\Monotonic$ $\SingleSource$ $\kDTSN$.

We provide a similar explanation as of that presented in Charikar et al. Consider a trivial approximation algorithm, where we take the shortest path from the source to each individual target. Consider the example where there are edges of cost $C-\epsilon$ to each target, and a vertex $v$ with distance $C$ from the source, and with distance $0$ to each target. In such a case, this trivial approximation algorithm will have an $O(k)$ approximation. Consider instead, an algorithm which found from the root, an intermediary vertex $v$, which was connected to all the targets via shortest path. In the case of the above example, this would find us the optimal subgraph. The algorithm below generalizes this process, by progressively finding optimal substructures with good cost relative to the number of targets connected. We show that this algorithm provides a good approximation ratio.

\begin{definition}[Metric closure of a temporal graph]
For a directed temporal graph $\mathcal{G} = (G_1=(V,E_1), G_2=(V,E_2), \ldots, G_T=(V,E_T))$, define its \emph{metric closure} to be $\tilde{G} = (V,E,\tilde{w})$ where $E = \bigcup_t E_t$ and $\tilde{w}(u,v,t)$ is the length of the shortest $u \to v$ path in $G_t$ (note that in contrast with $w$, $\tilde{w}$ takes three arguments).
\end{definition}

\begin{definition}[$V(T)$]
Let $T$ be a tree with root $r$. We say a demand of the form $(r,b,t)$ is \emph{satisfied} by $T$ if there is a path in $T$ from $r$ to $b$ at time $t$. $V(T)$ is then the set of demands satisfied by $T$.
\end{definition}

\begin{definition}[$D(T)$]
The \emph{density} of a tree $T$ is $D(T) = cost(T)/|V(T)|$, where $cost(T)$ is the sum of edge weights of $T$.
\end{definition}

\begin{algorithm}[H]
\begin{algorithmic}[1]
	\Statex
	\Function{$A_i$}{transitive closure $G = (V,E,w)$, $r$, $t$, $k$, $\mathcal{D} \subseteq V \times [T]$}
    	\If{ $(r, b_i, t_i)$ does not exist for least $k (b_i, t_i) \in \mathcal{D}, t_i \geq t$ } \Return \textsc{no solution}
        \EndIf
        
        \State $T \gets \emptyset$
        
        \While{$k > 0$}
        	\State $T_{best} \gets \emptyset$
            
            \ForAll{$(v,t') \in V \times [T], t' \geq t$ and $k', 1 \geq k' \geq k$}
                \State $T' \gets \A_{i-1}(G,v,t', k', \mathcal{D}) \cup {(r,v, t')}$
                \If{ $d(T_{BEST})> d(T')$} $T_{BEST} \gets T'$ \Comment{\textcolor{red}{Demand $i$ satisfied only if edge to $b_i$ at $t_i$ (ie $(x,b_i,t_i)$ for some x)}}
                \EndIf
            \EndFor
                \State $T \gets T \cup T_{BEST}$; $k \gets k - |D \cap V(T_{BEST})|$; $X \gets X - V(T_{BEST})$
            \EndWhile
        
        \State \Return $T$
    \EndFunction
\end{algorithmic}
\end{algorithm}

The way we will prove the approximation ratio of this algorithm is to show that it behaves precisely as the algorithm of Charikar et al. does, when given as input the $\DST$ instance produced by our reduction from $\Monotonic$ $\SingleSource$ $\kDTSN$ (Lemma~\ref{lemma:monotonic_singlesource_kDTSN_and_DST_are_equivalent}).

\begin{proposition}
The algorithm above is equivalent to the algorithm of Charikar et al., when applied to the $\DST$ instance output by the reduction of Lemma~\ref{lemma:monotonic_singlesource_kDTSN_and_DST_are_equivalent}.
\end{proposition}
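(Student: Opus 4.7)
The plan is to exhibit a cost- and demand-preserving correspondence $\Phi$ between rooted subtrees of the layered DST graph $G'$ (rooted at $r^1$) and rooted temporal subgraphs of $\mathcal{G}$ (rooted at $r$), and then to check that each recursive invocation of Charikar et al.'s algorithm on $G'$ enumerates candidate subtrees that are in $\Phi$-bijection with those enumerated by our modified algorithm on $\tilde{G}$, with identical density.

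The first step is a distance lemma: for every $v \in V$ and $t' \in [T]$, the shortest path distance from $r^1$ to $v^{t'}$ in $G'$ equals $\tilde{w}(r,v,t')$, the length of the shortest $r\to v$ path in $G_{t'}$. The $\leq$ direction lifts an $r\to v$ path in $G_{t'}$ to the $t'$-th layer of $G'$, using the edges $(u^{t'},v^{t'})$ (present by monotonicity since the edge exists at time $t'$) and the free edges $(v^i,v^{i+1})$ to move the root from level $1$ to level $t'$. The $\geq$ direction projects any $r^1\to v^{t'}$ path in $G'$ to a walk in $V$ whose non-free edges each exist at some time $\leq t'$, hence exist in $G_{t'}$ by monotonicity, with identical total weight.

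Next I set up $\Phi$. Given a rooted arborescence $T'\subseteq G'$ with root $r^1$ and leaves of the form $b_i^{t_i}$ for each satisfied demand, $\Phi(T')$ is obtained by projecting each vertex $v^{t'}$ to the temporal pair $(v,t')$ and each paid edge $(u^{t'},v^{t'})$ to the temporal edge $(u,v)$ at time $t'$. The free level-transition edges project to nothing. By the distance lemma, $\Phi$ is cost-preserving, and $T'$ satisfies the DST demand $(a^1,b_i^{t_i})$ iff $\Phi(T')$ satisfies the temporal demand $(r,b_i,t_i)$, so $|V(T')\cap D'|=|V(\Phi(T'))\cap\mathcal{D}|$ and hence $D(T')=D(\Phi(T'))$. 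Conversely, any temporal subgraph enumerated by our algorithm arises from some $T'$ by picking, for each paid temporal edge at time $t'$, its layer-$t'$ copy and inserting the necessary free edges; this defines $\Phi^{-1}$ on minimal candidates.

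Finally I would trace the two recursions in parallel. At level $i$, Charikar's algorithm on $G'$ iterates over intermediate nodes $x\in V(G')$ and counts $k'\leq k$, building $A^{\mathrm{CZ}}_{i-1}(G',x,k',X)\cup\{\text{shortest path }r^1\to x\}$ and keeping the densest. Writing $x=v^{t'}$, this range is in bijection with the pairs $(v,t')$ (with $t'\geq t$) and counts $k'$ iterated in our modified algorithm; by the distance lemma, the connecting shortest path cost matches $\tilde{w}(r,v,t')$; and by the structural correspondence the recursive subproblems are identified under $\Phi$. Since $\Phi$ preserves density, both algorithms select the same $T_{\mathrm{best}}$ at every iteration, deplete the demand set identically, and produce trees related by $\Phi$. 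The base case $i=1$ selects a single shortest $r^1\to b_i^{t_i}$ path in $G'$ iff our algorithm selects a single shortest $r\to b_i$ path in $G_{t_i}$, which is again the content of the distance lemma.

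The main obstacle will be the bookkeeping around intermediate nodes $v^{t'}$ in $G'$ that do not naturally correspond to a ``used'' temporal vertex, and around whether Charikar's algorithm might in principle route paid weight through the free level-transition edges. For the first, I would argue that any such vertex can be re-located to the next layer where $v$ is the endpoint of a paid edge without increasing cost or losing demands, so these choices may be excluded from the search without changing the output. For the second, free edges never lie on a strictly shortest path between distinct $V(G')$ vertices of positive cost, so every optimal intermediate shortest path in $G'$ is captured by some temporal pair $(v,t')$ enumerated by our algorithm, completing the equivalence.
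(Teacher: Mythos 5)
Your proposal is correct and follows essentially the same route as the paper's (much terser) proof: the paper simply lists the three correspondences you formalize, namely the identification of vertices $v^i$ of $G'$ with pairs $(v,i)$, the identification of metric-closure distances in $G'$ with $\tilde{w}(v,x,j)$ via monotonicity, and the replacement of the loop over $V(G')$ by a loop over $V \times [T]$. Your distance lemma, the density-preserving map $\Phi$, and the parallel trace of the two recursions are a more careful elaboration of exactly that argument, so no further comparison is needed.
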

\begin{proof}
To see this, note that in our reduced instance, we see a collection of vertices, ${v^1	,...,v^{|T|}}$. Therefore, the only equivalent modification's needed to the original algorithm are:
\begin{itemize}
	\item In the input, rather than keeping track of the current root as some vertex $v^i$, keep track of $v$ at the current timepoint instead, i.e. $(v,i)$. 
	\item The distance from some $v^i$ to $x^j, j \geq i$ is simply the distance from $v$ to $x$ at time $j$, i.e. $\tilde{w}(v,x,j)$.
	\item Instead of looping through all vertices in the form ${v^1,\ldots,v^{|T|}}$, we instead loop through all vertices, and all time points.
\end{itemize}
Therefore this algorithm guarantees the same approximation ratio for $\Monotonic$ $\SingleSource$ $\kDTSN$ as the original algorithm achieved for $\DST$. In particular for all $i>1$, $A_i(G, a, 0, k, D)$ provides an $i^2(i-1)k^{1/i}$ approximation to $\kDTSN$, in time $O(n^i k^{2i})$ \cite{charikar1999approximation,helvig2001improved}\footnote{The first paper \cite{charikar1999approximation} incorrectly claims a bound of $i(i-1)k^{1/i}$; this was corrected in \cite{helvig2001improved}.}.
\end{proof}

\section{Applications to Computational Biology} \label{section:computational_biology}

In molecular biology applications, networks are routinely defined over a wide range of basic entities such as proteins, genes, metabolites, or drugs, which serve as nodes. The edges in these networks can have different meaning, depending on the particular context. For instance, in protein-protein interaction (PPI) networks, edges represent physical contact between proteins, either within stable multi-subunit complexes or through transient causal interactions (i.e., an edge $(x,y)$ means that protein $x$ can cause a change to the molecular structure of protein $y$ and thereby alter its activity). The body of knowledge encapsulated within the human PPI network (tens of thousands of nodes and hundreds of thousands of edges in current databases, curated from thousands of studies \cite{Chatr15}) is routinely used by computational biologists to generate hypotheses of how various signals are transduced in eukaryotic cells \cite{Wu2013tcell}. The basic premise is that a process that starts with a change to the activity of protein $u$ and ends with the activity of protein $v$ must be propagated through a chain of interactions between $u$ and $v$. The natural extension regards a process with a certain collection of protein pairs $\{(u_1,v_1),\ldots,(u_k,v_k)\}$, where we are looking for a chain of interactions between each $u_i$ and $v_i$. In most applications, the identity of $u_i$ and $v_i$ is assumed to be known (or inferred from experimental data), while the identity of the intermediate nodes and interactions is unknown. The goal therefore becomes to complete the gap and find a probable subgraph of the PPI network that simultaneously enables signals between all the protein pairs, thereby explaining the overall biological activity. Since the edges in the PPI network can be assigned a probability value (reflecting the credibility of their experimental evidence), by taking the negative log of these values as edge weights, the task becomes minimizing the total edge weight, leading to an instance of the Steiner network problem. We have previously used this approach to study the propagation of a stabilizing signal in pro-inflammatory T cells, leading to the identification of a new molecular pathway (represented by a subgraph of the PPI network) that is critical for mounting an auto-immune response, as validated experimentally by perturbation assays and disease models in  mice \cite{Wu2013tcell}. Variations on this idea have been used successfully by \cite{scott2005identifying}, \cite{huang2009integration}, \cite{yosef2009toward}, \cite{BenSh2012yeast} and others.

While these studies contributed to the understanding of signal transduction pathways in living cells, they ignore a critical aspect of the underlying biological complexity. In reality, proteins (nodes) can become activated or inactivated at different points in time and with different dynamics, thereby giving rise to a PPI network that changes over time \cite{PrzytyckaSS10}. Recent advances in mass-spectrometry based measurements provide a way to estimate these changes at high throughput (\textit{e.g.,} measuring phosphorylation levels or overall protein abundance, proteome-wide) \cite{Kanshin20151202}. The next challenge is therefore to study connectivity problems that take into account not only the endpoints of each demand, but also the time (or condition) in which this demand should be satisfied. This added complication was tackled by Mazza et al. \cite{mazza2014minimum}, who introduced the ``Minimum $k$-Labeling (MKL)'' problem. In this setting, each connectivity demand comes with a label, which represents a certain experimental condition or time point. The task is to label edges in the PPI network so as to satisfy each demand using its respective label, while minimizing the number of edges in the resulting subgraph and the number of labels used to annotate these edges. They give a brief theoretical analysis of MKL, then present an ILP-based algorithm that works well in several experiments. While MKL was an important first step, the challenge remains to satisfy connectivity demands for different conditions, while taking into account changes to the activity of proteins under each condition, thus providing more reliable hypotheses for the mechanism of signal transduction. Simplifying the experimentally-measured activity of proteins to a binary view of presence/absence, the work presented here aims to address this challenge.

Our formulation of the $\TSN$ problem naturally arises from our work on PPI networks: the nodes $V$ represent proteins; the edges $E$ represent protein interactions, weighed by the confidence of the supporting experimental data (for $e \in E$, $w(e)=-\log(p(e))$ where $p(e)$ is the probability associated with interaction $e$); the existence function $\sigma$ can be derived from a proteome-wide assay, e.g., measuring protein abundance or phospnorylation levels over a set of $T$ time points or experimental conditions. Notably, while in this setting we assign states of presence/absence to proteins (nodes), it is mutually polynomial-time reducible with the formulation above (where the existence function is defined over the edges; see \autoref{proposition:variants_mutually_reducible}). The connectivity demands include pairs of proteins $(a,b)$ that represent the known end points of an unknown signal transduction cascade that is active in experimental condition $t \in [T]$. Finally, the desired output is a maximum-probability subgraph of the PPI network $G$ that explains the transduction of signals between all the queried protein pairs in the respective experimental condition, while taking into account the experimentally-derived information of protein presence/absence.

\end{document}